\providecommand{\algorithmname}{Algorithm}
\theoremstyle{plain}
\newtheorem{theorem}{Theorem}
\theoremstyle{definition}
\theoremstyle{plain}
\theoremstyle{plain}
\newtheorem{remark}{Remark}
\newtheorem{proposition}{Proposition}
\begin{document}
	\captionsetup[figure]{font={small}, name={Fig.}, labelsep=period}
	
	\title{Multi-Carrier NOMA-Empowered Wireless Federated Learning with Optimal Power and Bandwidth Allocation}
	\author{Weicai~Li,~Tiejun~Lv,~Yashuai~Cao,~Wei~Ni,~and~Mugen~Peng,~\emph{Fellow,~IEEE}
	
\thanks{W. Li and T. Lv are with the School of Information and Communication Engineering, Beijing University of Posts and Telecommunications (BUPT), Beijing 100876, China (e-mail: \{liweicai, lvtiejun\}@bupt.edu.cn).}
\thanks{Y. Cao is with the Department of Electronic and Communication Engineering, North China Electric Power University (NCEPU), Baoding 071003, Hebei, China (e-mail: yashcao@ncepu.edu.cn).}
\thanks{W.~Ni is with Data61, Commonwealth Science and Industrial Research Organisation (CSIRO), Sydney, New South Wales, 2122, Australia (e-mail: wei.ni@data61.csiro.au).}	
\thanks{M.~Peng is with State Key Laboratory of Networking and Switching Technology, Beijing University of Posts and Telecommunications, Beijing 100876, China (e-mail: pmg@bupt.edu.cn).}
	}
	
	\maketitle
	\begin{abstract}
Wireless federated learning (WFL) undergoes a communication bottleneck in uplink, limiting the number of users that can upload their local models in each global aggregation round. 
This paper presents a new multi-carrier non-orthogonal multiple-access (MC-NOMA)-empowered WFL system under an adaptive learning setting of Flexible Aggregation. 
\textcolor{blue}{Since a WFL round accommodates both local model training and uploading for each user, the use of Flexible Aggregation allows the users to train different numbers of iterations per round, adapting to their channel conditions and computing resources.} 
The key idea is to use MC-NOMA to concurrently upload the local models of the users, thereby extending the local model training times of the users and increasing participating users. 
A new metric, namely, Weighted Global Proportion of Trained Mini-batches (WGPTM), is analytically established to measure the convergence of the new system. 
Another important aspect is that we maximize the WGPTM to harness the convergence of the new system by jointly optimizing the transmit powers and subchannel bandwidths.
This nonconvex problem is converted equivalently to a tractable convex problem and solved efficiently using variable substitution and Cauchy's inequality.
As corroborated experimentally using a convolutional neural network and an 18-layer residential network, the proposed MC-NOMA WFL can efficiently reduce communication delay, increase local model training times, and accelerate the convergence by over 40\%, compared to its existing alternative.

	\end{abstract}
	
{\begin{IEEEkeywords}
    Wireless Federated Learning (WFL), multi-carrier non-orthogonal multiple-access (MC-NOMA), power allocation, bandwidth allocation.
\end{IEEEkeywords}}

	\section{Introduction}
	
\lettrine[lines=2]{B}{eing} a new and promising distributed machine learning (ML) framework, federated learning (FL) is able to protect user privacy, alleviate computing pressure, and reduce response delay by training models in a decentralized manner without sharing raw private data~\cite{wahab_federated_2021}. 
FL has been increasingly applied to wireless networks, referred to as wireless FL (WFL), where multiple wireless users train an ML model collaboratively, e.g., for intelligent transportation~\cite{manias2021making,Li2016Energy}, smart surveillance~\cite{9063670, Li2019Energy}, and many other Internet-of-Things (IoT) applications~\cite{9460016,Wang2015VANET,Li2019On}.
With the assistance of a central server, the wireless users train the model round by round. In each WFL round, the users start with the global model provided by the central server and update the model locally based on their local datasets. At the end of the round, the users upload their locally updated models to the server, where the local models are aggregated to update the global model.

A critical challenge arising is a communication bottleneck in the uplink of a WFL system, resulting from relatively limited system bandwidth~\cite{lim_federated_2020}. {\color{blue} Specifically, a ``large-is-better" conclusion was drawn in~\cite{yang_training_2020,zeng_energy-efficient_2020}; i.e., a larger number of participating users or a larger size of the dataset leads to a higher training efficiency of WFL.  
 In~\cite{xu_client_2021},  a ``later-is-better” phenomenon was observed that fewer clients participating in the early rounds of WFL and more in the later rounds can help WFL achieve a better accuracy, lower training loss, and better robustness.}
 Unfortunately, typical orthogonal multiple-access (OMA) limits the number of users that can upload their local models for global model aggregation per WFL round~\cite{7973146}.

Another critical challenge is the lack of joint design of model training and uploading in WFL. 
Particularly, wireless channels can differ substantially among users and change over time; i.e., the users can incur different delays in uploading local models and consequently, their local training times differ in a WFL round~\cite{wahab_federated_2021}. 
The conventional synchronous FL (Sync-FL) required all users to complete their local training before uploading their local models synchronously for global aggregation~\cite{abutuleb_joint_2020,pmlr-v54-mcmahan17a,9264742}. This is rigid and does not suit WFL.
Recently, asynchronous FL (Async-FL) was developed to allow users to upload their local models asynchronously after completing their local training~\cite{fedasync,Fed-AT,TT-fed,9725259}. 
However, there could still be non-negligible gaps between the time the local models are uploaded and the time the updated global model is announced, leading to the under-use of the computing powers of the users. 
An alternative to Async-FL is incomplete aggregation~\cite{pmlr-v130-ruan21a}, where all users upload their local models synchronously, and some of the models are trained incompletely with fewer iterations than others. 
Unfortunately, no consideration has been given to the model uploading under incomplete aggregation.

{\color{blue} Existing designs of WFL have focused primarily on three key performance indicators (KPIs), namely, model accuracy, convergence speed, and energy efficiency, separately in most cases, as summarized in Table \ref{tab:kpi}. 
Attempts have been witnessed to improve the training accuracy of WFL, typically  by increasing the number of participating users~\cite{zeng_energy-efficient_2020,xu_client_2021} and/or the size of the datasets involved in training~\cite{yang_training_2020}.
The authors of~\cite{yang_training_2020} and~\cite{zeng_energy-efficient_2020} revealed that large datasets or a large number of participating users contribute to the training efficiency of WFL, respectively; i.e., ``larger-is-better."
In~\cite{xu_client_2021},  the impacts of the ``later-is-better” phenomenon, i.e., involving more training data at later stages of WFL, were analyzed quantitatively on the training accuracy, loss, and robustness of WFL.} 

{\color{blue} Some studies have expedited the convergence of WFL by maximizing the communication efficiency, e.g., data rate or throughout. This is due to the fact that shorter communication time helps either reduce the duration of an FL round, or extend the local model training time of the users within a round. In~\cite{9322270}, the weighted communication rate of all participating users was maximized for the fast convergence of WFL by formulating a maximum-weight independent set problem that was solved approximately based on graph theory.}

In \cite{9718086}, a base station (BS) was designed to transfer wireless power to energize edge devices for local model training and uploading. 
The schedule of the wireless power transfer was optimized to minimize a system-wise cost, accounting for both the energy consumption and FL convergence latency. 
{\color{blue} Async-FL was also designed to allow each individual user to upload its local model whenever completing its local training~\cite{fedasync,Fed-AT,TT-fed,9725259}. This helps reduce latency and thus speeds up convergence.}

\begin{table}[]\small
\centering
\caption{The summary of related works from the perspectives of design objectives.}
\label{tab:kpi}
\begin{tabular}{c|ccccc|c|c}
\hline
\multirow{2}{*}{\diagbox[innerwidth=3.3cm]{KPI}{Reference}} & \multicolumn{5}{c|}{Sync. FL} & \begin{tabular}[c]{@{}c@{}}Async. \\ FL\end{tabular} & \begin{tabular}[c]{@{}c@{}}Flexible \\ Aggregation\end{tabular} \\ \cline{2-8} 
 & \multicolumn{1}{c|}{\cite{yang_training_2020,8952884}} & \multicolumn{1}{c|}{\cite{zeng_energy-efficient_2020,xu_client_2021}} & \multicolumn{1}{c|}{\cite{9264742,9764370,9844152}} & \multicolumn{1}{c|}{\cite{9322270}} & \multicolumn{1}{c|}{\cite{9718086}}  & \cite{fedasync,Fed-AT,TT-fed,9725259}& This paper \\ \hline
Accuracy & \multicolumn{1}{c|}{$\checkmark$} & \multicolumn{1}{c|}{$\checkmark$} & \multicolumn{1}{c|}{} & \multicolumn{1}{c|}{} & \multicolumn{1}{c|}{} &  & $\checkmark$ \\ \hline
Convergence speed & \multicolumn{1}{c|}{} & \multicolumn{1}{c|}{}  & \multicolumn{1}{c|}{} & \multicolumn{1}{c|}{$\checkmark$} & \multicolumn{1}{c|}{$\checkmark$} &  $\checkmark$ & $\checkmark$ \\ \hline
Energy efficiency & \multicolumn{1}{c|}{} & \multicolumn{1}{c|}{$\checkmark$} & \multicolumn{1}{c|}{$\checkmark$} & \multicolumn{1}{c|}{} & \multicolumn{1}{c|}{$\checkmark$} &  & $\checkmark$ \\ \hline
\end{tabular}
\end{table}

{\color{blue} Other studies have been devoted to reducing the energy consumption of WFL, including both computing and communication energy consumption.
In~\cite{9264742}, the total energy consumption of all users was reduced under a latency constraint by developing a suboptimal, low-complexity, iterative algorithm.}
In~\cite{9764370}, deep reinforcement learning was employed to minimize the cost of mobile edge computing-assisted hierarchical FL by resource allocation and IoT device orchestration. 
{\color{blue} 
In~\cite{9844152}, the selection of IoT devices and relays, and the transmit powers and CPU frequencies of the selected IoT devices were optimized to minimize the energy consumption, subject to the delay constraint of the FL. A graph-theoretic approach was taken to design a low-complexity, suboptimal solution by applying a greedy maximum-weight-independent-set algorithm.
}

{\color{blue}On the other hand, non-orthogonal multiple-access (NOMA), especially multi-carrier NOMA (MC-NOMA), is an advanced multiple-access technique for multiplexing multiple users in the power domain in every subchannel of a multi-channel system~\cite{7557079}.
It can admit more users to transmit concurrently, thereby increasing the volume of training data involved in model training, compared to the state-of-the-art OMA systems, such as (orthogonal) frequency-division multiple access (OFDMA/FDMA), time-division multiple-access (TDMA), and code-division multiple-access (CDMA). 
Moreover, the concurrent transmissions of the users under MC-NOMA also prevent the awkward situation (i.e., under TDMA) that the users scheduled to upload their local models earlier have to shorten their local training time.}  
Despite NOMA being considered for local model transmissions in~\cite{9718086,9764370,9844152}, it was not jointly considered with the learning parameter selection and their impact on the convergence of WFL was overlooked.

With different processing speeds, users may have to upload their local models asynchronously. 
Async-FL~\cite{fedasync} allowed every user to upload its local model whenever completing its local training.
The FL server aggregated the local model and the latest global model at a cost of stability and convergence delay compared to Sync-FL. 
A few improvements of Async-FL are Federated Learning with Asynchronous Tiers (Fed-AT)~\cite{Fed-AT} and Time-Triggered Federated Learning (TT-Fed)~\cite{TT-fed}. 
Fed-AT and TT-Fed divided an FL process into multiple Sync-FL parallel processes among users with similar processing speeds. In the case of Fed-AT, the global models of the multiple parallel Sync-FL processes were aggregated asynchronously~\cite{Fed-AT}. 
In the case of TT-Fed, 
the global models were synchronously aggregated among some parallel Sync-FL processes that were due to aggregate in the same round~\cite{TT-fed}. In~\cite{9725259}, retransmissions of local models were enabled in case of transmission collisions. The weighting coefficients of the local models were optimized to capture the freshness of the models.
In~\cite{fedasync,Fed-AT,TT-fed,9725259}, the channel between a user and the FL server was assumed to be time-invariant, and so was the local model training time of the user. Moreover, the users did not resume local training until scheduled global aggregation and announcement in~\cite{Fed-AT,TT-fed,9725259}, leading to inefficient use of computing resources. 

{\color{blue}Different from Sync-FL and Async-FL, Flexible Aggregation~\cite{pmlr-v130-ruan21a} maintained a consistent duration of the global aggregation rounds like Sync-FL and extended FedAVG by allowing flexible epochs in each global aggregation round. The users could run different numbers of iterations before uploading their local models. To this end, Flexible Aggregation offers flexible local training settings and efficient use of the users' computing powers, compared to Async-FL~\cite{fedasync,Fed-AT,TT-fed,9725259}.} However, the transmission delays of local model uploading were overlooked in~\cite{pmlr-v130-ruan21a}, which would inevitably impact the local training time within a round.

{\color{blue}In a different yet relevant context, over-the-air (OTA) computation has been integrated into WFL systems~\cite{8952884,9484466,9829190,ota}. OTA-FL exploits the superposition property of wireless multiple-access channels and aggregates local models in the analog domain. 
Despite the use of NOMA-based superposition of the local models, the MC-NONA WFL considered in this paper is distinctively different from OTA-FL in the sense that the local models need to be recovered individually and aggregated in the digital domain (as opposed to the analog domain in OTA-FL). 
The digital operation is known to be superior in practicality and hardware cost~\cite{SHI}.

This paper presents a new WFL paradigm, which incorporates MC-NOMA into a WFL system to holistically consider all the three KPIs of model accuracy, converge speed, and energy efficiency.} 
The system supports Flexible Aggregation to allow users to train different numbers of iterations in each WFL round, adapting to the channel conditions and computing powers of the users. 
As a result, the communication bottleneck of WFL systems can be substantially alleviated, and the computing powers of the users can be efficiently utilized. 

The key contributions of the paper are listed as follows. 
\begin{itemize}
    \item 
    We put forth a new MC-NOMA-empowered WFL system with Flexible Aggregation, where multiple users upload their local models concurrently in each subchannel, thereby extending the local model training times of the users and increasing participating users.

    \item 
     A new metric, namely, Weighted Global Proportion of Trained Mini-batches (WGPTM), is defined to measure the convergence of the new system based on the analysis of the convergence upper bound.

    \item 
    A new problem is formulated to maximize the WGPTM and harness the convergence of the MC-NOMA WFL by optimizing the transmit powers of the users and the bandwidths of the subchannels.
    The problem is non-trivial due to its nonconvexity. 

    \item 
    By employing variable substitution and Cauchy's inequality (in particular, the equality condition of Cauchy's inequality), we establish the necessary conditions of the optimal solution to the nonconvex problem and then transform the problem consequently into a convex problem solved with polynomial complexity.

\end{itemize}

 The proposed approach is comprehensively tested using a convolutional neural network (CNN) and an 18-layer residential network (ResNet18) on the Federated Extended MNIST (F-MNIST) and Federated CIFAR100 (F-CIFAR100) datasets. 
It is shown that under the Flexible Aggregation, the new MC-NOMA WFL with the optimal joint power and bandwidth allocation speeds up convergence by 10\% and 22\%, compared to MC-NOMA WFL with the optimal power allocation only and an MC-OMA WFL, respectively. Moreover, the proposed MC-NOMA WFL with Flexible Aggregation can outperform its potential existing alternatives, i.e., MC-OMA WFL under Sync-FL and Async-FL, by over 60\% and 40\%, respectively.

The rest of this paper is organized as follows. 
In Section~\ref{section:system}, the system model is described, the convergence upper bound is analyzed, and the new metric is defined. In Section \ref{section:problem}, the new problem is formulated to jointly optimize the power and bandwidth allocations and its solution is derived. Simulation results are presented in Section~\ref{section: simulation}, followed by conclusions in Section~\ref{section:con}. The notations used are collated in Table \ref{notations}.

	\begin{figure}[t]
		\centering{}
		\includegraphics[scale=0.4]{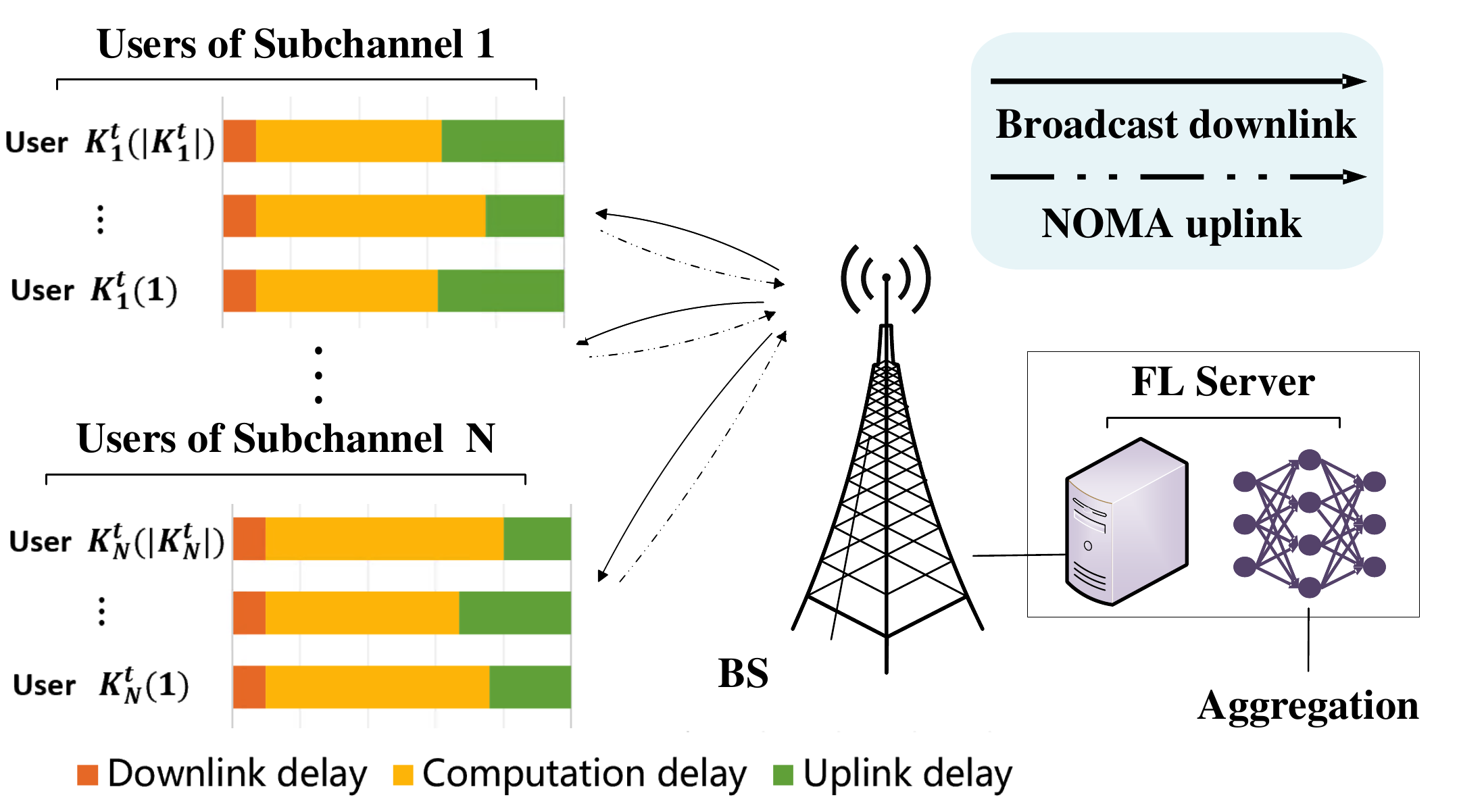}
		\caption{An illustration of the new MC-NOMA WFL system,  
  where all users have the same downlink delay and total delay, but different users have different uplink and computation delays (on the left of the figure).}
		\label{fig:system}
	\end{figure}
	
	\section{System Model}\label{section:system}
	The considered MC-NOMA WFL system consists of $K$ edge users and an FL server. The edge users train their local models based on their local data. The FL server aggregates the local models uploaded by the edge users to produce the global model and then broadcasts the global model to all users. 
 The FL server is hard-wired to a BS. The edge users upload their local models to and receive the global model from the FL server via wireless channels between the users and BS,
using MC-NOMA in the uplink and broadcast in the downlink, respectively.

	Let $\mathcal{K}$ collect the indexes to all users, $\mathcal{D}_k$ be the local dataset of user $k\in\mathcal{K}$, and $\mathcal{D}=\underset{{k\ensuremath{\in\mathcal{K}}}}{\sum}\mathcal{D}_{k}$ be the global dataset that collects the local dataset $\mathcal{D}_k$ of all users.
	Each user $k\in {\cal K}$ can define its objective function $F_k(\mathbf{w})$ that is the average loss function over its local dataset ${\cal D}_k$. Here, $\mathbf{w}\in \mathbb{R}^{d}$ is the model parameter vector, and $d$ is the dimension of the FL model parameter. The minimum value of $F_k(\mathbf{w})$ is denoted by $F_k^*$. The global objective of the FL is to minimize a weighted loss function of all users, i.e., $\underset{\mathbf{w}}{\min} \left(F(\mathbf{w})\right)=\underset{\mathbf{w}}{\min} \underset{k\in\mathcal{K}}{\sum}e_{k}F_k(\mathbf{w}),$ where $e_k=\frac{|{\cal D}_{k}|}{|\mathcal{D}|}$ is the weighting coefficient of user $k$, $|\cdot|$ denotes the cardinality of a set, and $|\mathcal{D}|=\underset{{k\ensuremath{\in\mathcal{K}}}}{\sum}|\mathcal{D}_k|$.

	 \begin{table}[t]

	\centering
	\caption{Notation and Definition}\label{notations}
	\begin{tabular}{ m{1.5cm}<{\centering}|m{14cm}  }
		\hline
		\textbf{Notation}& \multicolumn{1}{c}{\textbf{Definition}} \\
		\hline
		$\mathcal{K}$, $\mathcal{K}_{n}$&{\color{blue}The} set of all users, and the set of users assigned to the $n$-th subchannel; \\\hline
		$\mathcal{D}$, $\mathcal{D}_k$&Global dataset, and the local dataset of user $k$; \\\hline
		$\mathcal{M}$, $\mathcal{M}_k$&Global mini-batch set, and the local mini-batch set of user $k$; \\\hline
		$e_k$& Weighting coefficient of user $k$; \\\hline
		$F$, $F_k$&Global FL objective function and the FL objective function of user $k$; \\\hline
		$\eta$, $\eta_k$&The learning rate of all users for training an iteration and the learning rate of user $k$ for training a mini-batch;\\\hline
		$\mathbf{w}$&FL model parameter vector;\\\hline
		$\phi_{k}^t$ & Number of trained mini-batches by user $k$ in the $t$-th WFL round;\\\hline
		$\Phi_{k}^t$, $\Phi_{t}$ & Local Proportion of Trained Mini-batches (LPTM) of user $k$ and Weighted Global Proportion of Trained Mini-batches (WGPTM) in the $t$-th WFL round; \\\hline
		${T_{k}}$, $T_{{\rm{u}},k}$, $T_{{\rm c},k}$, $T_{{\rm{d}}}$&The total delay, uplink delay, computation delay, and downlink delay of user $k$ in a WFL round;\\\hline
		$h_{n,i}$& The channel gain of user $i$ on the $n$-th subchannel in the current WFL round; \\\hline
		$B$, $B_n$&Total uplink bandwidth and the bandwidth of the $n$-th subchannel;  \\\hline
		$p_{n,i}$, $\mathbf{p}_{n}$& The transmit power of the $i$-th user and the set of transmit powers on the $n$-th subchannel in a   round; \\\hline
		$P_{\max }$& The maximum transmit power of a user. \\\hline
		\end{tabular}
	\end{table}
	\subsection{WFL Workflow and Delay Model}
	
	The training process of the MC-NOMA WFL system is divided into rounds.
	 {\color{blue}A WFL round must accommodate both the local model training and uploading of each user. When the channel condition is poorer between a user and the server, the transmit rate of the user is lower and the user spends longer time within a WFL round in uploading its local model. As a consequence, shorter time can be used to train the local model within the round. 
	
	 }

	\subsubsection{Communication Protocol and Delay}
	
 {\color{blue}Assume that the channels experience block fading; in other words, the channel coefficients remain unchanged during a WFL round, and change independently between WFL rounds.} 
	At the beginning of the $t$-th WFL round, the FL server broadcasts the latest global model $\mathbf{w}_{\mathcal{G}}^{t-1}$ to all users. The downlink delay, denoted by $T_{{\rm{d}}}^t$, is consistent across all users.	
	Upon the receipt of the global model, each user $k$, $\forall k\in\mathcal{K}$ trains its local model $F_k(\mathbf{w})$ based on the global model $\mathbf{w}_{\mathcal{G}}^{t-1}$ and its local data ${\cal D}_k$. The user updates and uploads its local model $\mathbf{w}_k^t$ to the FL server by the end of the $t$-th WFL round.
	The FL server aggregates the local models of the users and updates the global model, i.e., 
		$\mathbf {w}_{\mathcal{G}}^{t}=\sum\limits _{k\in\mathcal{K}}e_{k}\mathbf {w}_{k}^{t}$.
	The global model ${\bf{w}}_{\mathcal{G}}^t$ is broadcast to the users at the beginning of the $(t+1)$-th WFL round.

    Suppose that the system bandwidth is $B$~Hz in the uplink during the $t$-th WFL round. The bandwidth is divided into $N$ subchannels with configurable bandwidths. 
	The index to the subchannels is $n =1,\cdots,N$. 
	Each user is assigned to a subchannel according to a NOMA user clustering algorithm, e.g., the one developed in~\cite{7557079}.
	Power-domain NOMA (PD-NOMA) is performed in each subchannel.	
	The bandwidth of the $n$-th subchannel is $B^t_{n}$ Hz
	in the $t$-th WFL round. 
	Let $\mathcal{K}^t_{n}$ collect the indexes to the users assigned to the $n$-th subchannel in the round.  $|\mathcal{K}^t_{n}|$ is the number of users assigned to the $n$-th subchannel. $\mathcal{K}=\cup_{n = 1}^N\mathcal{K}^t_{n}$. $K=\sum_{n = 1}^N|\mathcal{K}^t_{n}|$.
	For illustration convenience, the users are sorted in the ascending order of their channel gains in $\mathcal{K}^t_{n}$, i.e., $|h^t_{n,1}|^{2}\le\cdots\le|h^t_{n,|\mathcal{K}^t_{n}|}|^{2}$, where $h^t_{n,i}$ is the channel coefficient from the $i$-th user in $\mathcal{K}^t_{n}$, i.e., user $\mathcal{K}^t_{n}(i)$, to the BS on the $n$-th subchannel in the $t$-th WFL round.

	At the BS (to which the FL server is attached), successive interference cancellation (SIC) \cite{7959539} is carried out to decode the uploaded local model parameters of the users. The users assigned to a subchannel are decoded and then canceled in the descending order of their channel gains. The received signal of user $\mathcal{K}^t_{n}(i)$ at the BS is given by 
	\begin{align}
		y^t_{n,i}=  &h^t_{n,i}\sqrt {p^t_{n,i}} s^t_{n,i} + \sum\limits_{j=1}^{i-1} h^t_{n,j}\sqrt {p^t_{n,j} } s^t_{n,j}+ z^t_{n}, 
		\label{eq:y_u}
	\end{align}	
	where $p^t_{n,i}$ is the transmit power of the $i$-th user in $\mathcal{K}^t_{n}$, i.e., user $\mathcal{K}^t_{n}(i)$, in the $t$-th WFL round; $s^t_{n,i}$ is the encoded signal of the user's local model parameter ${\mathbf{w}}^t_{\mathcal{K}^t_{n}(i)}$; and $z^t_{n}$ is the band-limited additive white Gaussian noise in the $n$-th subchannel. The noise power spectral density is $N_0$. The noise power is ${N_0}{B^t_{n}}$. 
	For ease of exposition, let $\mathbf{p}^t_{n} = \{ p^t_{n,i},\,\forall {i}={1,\cdots,|\mathcal{K}^t_{n}|}\}$ collect the transmit powers of all users in $\mathcal{K}^t_{n}$.
	With the SIC, the SINR of user $\mathcal{K}^t_{n}(i)$ is written as
	\begin{align}
		\Upsilon ^t_{n,i} = {{{\left| {h^t_{n,i}} \right|}^2}p^t_{n,i}}\Big/\Big({\sum\limits_{j=1}^{i-1} {{{\left| {h^t_{n,j}} \right|}^2}p^t_{n,j} + {N_0}{B^t_{n}}} }\Big). 
		\label{eq:YY_u}
	\end{align}
	The uplink transmit rate of user $\mathcal{K}^t_{n}(i)$ is given by
	\begin{align}
		R^t_{n,i}  = {B^t_{n}}{\log _2}\left( {1 + \Upsilon ^t_{n,i} } \right),\,i=1, \cdots, |\mathcal{K}^t_{n}|. \label{eq:R_u}
	\end{align}
	The corresponding local model uploading delay is 
	\begin{equation}\label{eq:T_u}
		 T_{{\rm u},{\mathcal{K}^t_{n}(i)}}^t = \frac{{S}}{R^t_{n,i}}	={\frac{{S}}{{{B^t_{n}}{\log }_2}\Big( 1 + \Upsilon ^t_{n,i}  \Big)}}=S\Big/{{{B^t_{n}}{\log }_2}\bigg( {\frac{{\sum\limits_{j=1}^i {{{\left| {h^t_{n,j}} \right|}^2}p^t_{n,j} + {N_0}{B^t_{n}}} }}{{\sum\limits_{j = 1}^{i - 1} {{{\left| {h^t_{n,j}} \right|}^2}p^t_{n,j} + {N_0}{B^t_{n}}} }}} \bigg)},
	\end{equation}
	where $i=1, \cdots, |\mathcal{K}^t_{n}|$; and $S$ is the size of the encoded local model parameters $\{s^t_{n,i},\forall i=1,\cdots,|\mathcal{K}_n^t|,\forall n=1,\cdots,N\}$, which is consistent across all users. 

	\subsubsection{Computation Model}
    Consider Flexible Aggregation~\cite{pmlr-v130-ruan21a}, the new MC-NOMA WFL system allows the users to execute different numbers of iterations within a round. (The local model of a user is trained on its local dataset once per iteration.) A user can even upload its latest local model parameter if the model parameter is obtained in the middle of the last iteration. 
     {\color{blue} With this flexibility, the user can decide its local model training time and model uploading time adapting to the changes in its  channel conditions. This allows the user to fully utilize their computing and communication resources to substantially increase the number of users involved and the amount of training data utilized in WFL.}
     
    To implement Flexible Aggregation, each user $k$ divides its dataset ${\cal D} _k$ evenly into independent and identically distributed (i.i.d.) mini-batches. Each mini-batch comprises $M$ samples and obeys the same distribution as $\mathcal{D}_k$. {\color{blue} Let $\mathcal{M}_k$ collect all mini-batches of training data at user $k$. Then, $|{\cal M}_k|$ is the number of mini-batches in $\mathcal{M}_k$.}
	$|{\cal D} _k|=|{\cal M}_k|\times M$.	
 Let $\eta$ be the learning rate of an iteration, which is consistent across all users. $\eta_k=\eta/| {\cal M}_k|$ is the learning rate at which user $k\in\mathcal{K}$ trains a mini-batch.
	The number of complete iterations and the number of mini-batches trained in the last, incomplete iteration can be configured for each user to accommodate its local training and model uploading within a round, based on the channel conditions of the user.	

	The number of floating-point operations (FLOPs) is typically used to measure computational complexity. 
	The number of floating-point operations per second (FLOPS) measures the processing speed of a user.
	Let $\alpha$ be the FLOPs required to train a local model based on a mini-batch of local data, and $\phi_{k}^t$ be the number of mini-batches trained at user $k \in \mathcal{K}$ in the $t$-th WFL round. ${\alpha_{k}^t} =\phi_{k}^t {\alpha}$ is the FLOPs needed at the user in the round. The local training delay of user $k$ is
	\begin{align}
		T_{{\rm{c}},k}^t = \alpha_{k}^t/\beta_k = \phi_{k}^t\alpha/ \beta_k ,\label{eq:fl delay}
	\end{align}
	where $\beta_k$ (in FLOPS)  is the processing speed of user $k\in \mathcal{K}$.

	\subsubsection{Total Delay}
	In the $t$-th WFL round, the total delay of user $k \in \mathcal{K}$ is defined as
	\begin{align}
		{T_{k}^t} = T_{{\rm{u}},k}^t + T_{{\rm c},k}^t+ T_{{\rm{d}}}^t.
		\label{eq:Time}
	\end{align}
	The users have the same downlink delay, and different computation and uplink delays, as illustrated in Fig. \ref{fig:system}. A hard requirement of WFL is that all users complete uploading their local models by the end of a WFL round. Nevertheless, the users assigned to a subchannel can have different starting times for model uploading\footnote{Some users start transmitting their local models earlier than the others in a subchannel due to their relatively poorer channels. These users would undergo weaker co-channel interference and enjoy higher transmit rates than \eqref{eq:R_u} before the others start. The transmission schedules produced by the algorithm proposed in this paper remain valid.}.

	\subsection{Convergence Analysis}\label{convergence}
	In a WFL round, we define the Local Proportion of Trained Mini-batches (LPTM)
 of a user to be the ratio of the number of locally trained mini-batches in the round to the total number of mini-batches in the local dataset of the user. 
	In the $t$-th WFL round, the LPTM of user $k$ is
	\begin{align}
		\Phi_{k}^t=\phi_{k}^t\Big/|{\cal M}_{k}|=\phi_{k}^t\Big/e_{k}\underset{k\in\mathcal{K}}{\sum}|\mathcal{M}_{k}|,\label{iteration}
	\end{align}
	where $\phi_{k}^t$ is the number of locally trained mini-batches at user $k$ in the $t$-th WFL round. The second equation in \eqref{iteration} is because $|{\cal D} _k|=|{\cal M}_k|\times M$ and hence $e_k=|{\cal D} _k|/|{\cal D}|=|{\cal M} _k|/\sum_{k\in {\cal K}}|{\cal M}_k|$.

	{\color{blue} Assume that there exists such a constant $m>0$ that local objective functions $F_k,\, k\in\mathcal{K}$ and the global objective function $F$ are $m$-strongly convex for all $x\in\mathbb{R}^{d}$, i.e., 
	\begin{align}
		\nabla^{2}f\left(x\right)\succeq mI,\forall f \in \{F_k,\, k\in\mathcal{K}\}\cup \{F\},\label{ass1}
	\end{align}
	where $\nabla$ stands for gradient and $I$ is the identity matrix.
    Also assume that there is such a constant $L>0$ that, if $F_k, \, k\in\mathcal{K}$, and $F$ are $L$-smooth, for any $x\in\mathbb{R}^{d}$, we have 
	\begin{align}
		\nabla^{2}f\left(x\right)\preceq LI,\forall f \in \{F_k,\, k\in\mathcal{K}\}\cup  \{F\}.\label{ass2}
	\end{align}
	Further, assume that there exists such a constant $G$ that the expected squared norm of stochastic gradients is uniformly bounded: 
	\begin{equation}
		\mathbb{E}\{\left\Vert \nabla F_{k}\right\Vert ^{2}\}\leq G^{2},\,\forall k\in{\cal K},    \label{ass3}
	\end{equation}
	where $\mathbb{E}\{\cdot\}$ takes expectation and $\Vert\cdot \Vert$ denotes $\ell_1$-norm.
 
Note that these assumptions have been widely considered in the literature for the convergence analysis of FL, e.g.,~\cite{9264742,pmlr-v130-ruan21a,boyd2004convex}. By joining Assumptions~\eqref{ass1} and~\eqref{ass2}, i.e., $\nabla^{2}f\left(x\right)$ is upper and lower bounded, the objective function is strongly convex and smooth, based on which we can analyze the convergence of FL and derive the optimality gap. On the other hand, strongly convex and smooth objective functions have been widely applied in ML models, e.g., multi-layer perceptron (MLP) and support vector machine (SVM).}

	Under the assumptions, \eqref{ass1}--\eqref{ass3}, the relation between the convergence bound at the  $t$-th round of the proposed WFL system, and the LPTM $\Phi_{k}^t,\,\forall k \in\mathcal{K}$, is established in the following theorem.

	\begin{theorem}
		After local training and global aggregation in the $t$-th WFL round, the global model parameter is ${\bf{w}}_{\mathcal{G}}^t$. With the global loss function $F$ minimized in the $t$-th WFL round, the upper bound of the difference between $\mathbb{E}(F(\mathbf{w}_{\mathcal{G}}^{t}))$ and $F\left(\mathbf{w}_{\mathcal{G}}^{t-1}\right)$ satisfies
		{\color{blue}\begin{align}
	\mathbb{E}\left(F\left(\mathbf{w}_{\mathcal{G}}^{t}\right)\right) 	-F\left(\mathbf{w}_{\mathcal{G}}^{t-1}\right)  
			\leq&-\Big(\underset{k\in{\cal K}^t_{n}}{\min}\{c_k\mathbb{E}\left(F_{k}\left(\mathbf{w}_{\mathcal{G}}^{t-1}\right)-F_{k}^{*}\right)\}-\eta G^{2}\Big)\underset{k\in\mathcal{K}}{\sum}e_{k}\Phi_{k}^t	 \notag \\
				&+\underset{k\in\mathcal{K}}{\sum}e_{k}\mathcal{O}\left((c_k)^{2}\right)\mathbb{E}\left(F_{k}\left(\mathbf{w}_{\mathcal{G}}^{t-1}\right)-F_{k}^{*}\right).
			\label{theorem}
		\end{align}}
  Here, $c_k= 2m\eta_k - mL\eta_k^2$. The upper bound in \eqref{theorem} is proportional to $-\underset{k\in\mathcal{K}}{\sum}e_{k}\Phi_{k}^t$, i.e., 
		\begin{align}
			\Big(\underset{k\in{\cal K}^t_{n}}{\min}\{c_k\mathbb{E}\left(F_{k}\left(\mathbf{w}_{\mathcal{G}}^{t-1}\right)-F_{k}^{*}\right)\}&-\eta G^{2} \Big)\underset{k\in\mathcal{K}}{\sum}e_{k}\Phi_{k}^t	\propto \underset{k\in\mathcal{K}}{\sum}e_{k}\Phi_{k}^t.
		\end{align}		
	\end{theorem}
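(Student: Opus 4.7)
The plan is to bound the one-round decrease of $F$ by first deriving a tight single-step SGD contraction at each user, unrolling that recursion across the $\phi_k^t$ local mini-batch iterations, and finally aggregating with the weights $e_k$ produced by $\mathbf{w}_{\mathcal{G}}^t=\sum_k e_k\mathbf{w}_k^t$. The per-user analysis will yield a contraction factor $(1-c_k)$ per iteration plus a stochastic remainder controlled by $G^2$; unrolling and performing a Bernoulli-type expansion will then give the structural form of \eqref{theorem}, with the $\mathcal{O}(c_k^2)$ term appearing as the second-order remainder of that expansion.

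Concretely, for a single mini-batch update $\mathbf{w}_k^{\tau+1}=\mathbf{w}_k^\tau-\eta_k\tilde g_k^\tau$ with i.i.d.\ mini-batches so that $\mathbb{E}[\tilde g_k^\tau]=\nabla F_k(\mathbf{w}_k^\tau)$, I would invoke $L$-smoothness \eqref{ass2} to expand $F_k(\mathbf{w}_k^{\tau+1})$ to second order, take expectation over the mini-batch, bound the squared-gradient remainder using \eqref{ass3}, and apply the strong-convexity relation $\|\nabla F_k\|^2\ge 2m(F_k-F_k^*)$ implied by \eqref{ass1}. The resulting bound
\begin{equation*}
\mathbb{E}[F_k(\mathbf{w}_k^{\tau+1})-F_k^*]\le(1-c_k)\,\mathbb{E}[F_k(\mathbf{w}_k^\tau)-F_k^*]+\mathcal{O}(\eta_k^2 G^2)
\end{equation*}
reveals $c_k=2m\eta_k-mL\eta_k^2$ as the combination of the $-2m\eta_k$ from strong convexity and the $+mL\eta_k^2$ slack produced by the smoothness remainder.

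Unrolling this inequality $\phi_k^t$ times from $\mathbf{w}_{\mathcal{G}}^{t-1}$, I would expand $(1-c_k)^{\phi_k^t}=1-\phi_k^t c_k+\mathcal{O}(c_k^2)$ and use the identities $\eta_k|\mathcal{M}_k|=\eta$ and $\phi_k^t/|\mathcal{M}_k|=\Phi_k^t$ to collapse the accumulated stochastic noise $\sum_{\tau<\phi_k^t}(1-c_k)^\tau \mathcal{O}(\eta_k^2 G^2)$ to order $\Phi_k^t\eta G^2$ and to rewrite $\phi_k^t c_k$ as a multiple of $\Phi_k^t$. Weighting the per-user bounds by $e_k$ and summing, together with convexity of each $F_j$ and the decomposition $F=\sum_k e_kF_k$ to majorise $\mathbb{E}[F(\mathbf{w}_{\mathcal{G}}^t)]-F(\mathbf{w}_{\mathcal{G}}^{t-1})$ by the weighted per-user differences, yields the structural form of \eqref{theorem}. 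The bracketed coefficient is obtained after lower-bounding $c_k\mathbb{E}[F_k(\mathbf{w}_{\mathcal{G}}^{t-1})-F_k^*]$ within each subchannel $\mathcal{K}_n^t$ by its minimum, which allows it to be factored out of the summation $\sum_k e_k\Phi_k^t$; the proportionality claim is then immediate because the remaining round-dependent quantities $c_k$ and $F_k(\mathbf{w}_{\mathcal{G}}^{t-1})-F_k^*$ are fixed within the round.

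The hardest step will be propagating the stochastic noise through the $\phi_k^t$-fold recursion while keeping the final dependence expressed in $\eta$ and $\Phi_k^t$ rather than $\eta_k$ and $\phi_k^t$, and simultaneously controlling the $\mathcal{O}(c_k^2)$ residual of the Bernoulli expansion without inflating it by factors that scale with $|\mathcal{M}_k|$ or $|\mathcal{K}_n^t|$; any loose bookkeeping there would destroy the clean proportionality to $\sum_k e_k\Phi_k^t$ that the theorem concludes with.
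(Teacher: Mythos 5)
Your overall skeleton — a per-user contraction with factor $(1-c_k)$, a Bernoulli expansion $(1-c_k)^{\phi_k^t}=1-\phi_k^t c_k+\mathcal{O}(c_k^2)$, weighting by $e_k$, and factoring the bracket out via the minimum over the subchannel — matches the paper's route, which obtains the $\phi_k$-step contraction $F_k(\mathbf{w}_k^t)-F_k^*\le(1-c_k)^{\phi_k}(F_k(\mathbf{w}_{\mathcal{G}}^{t-1})-F_k^*)$ in one shot from the standard gradient-descent bound. However, there is a genuine gap at the aggregation step. You need to pass from the per-user quantities $\sum_k e_k(F_k(\mathbf{w}_k^t)-F_k^*)$ to the global quantity $F(\mathbf{w}_{\mathcal{G}}^t)=\sum_k e_kF_k(\mathbf{w}_{\mathcal{G}}^t)$, i.e., to control $\sum_k e_k\bigl(F_k(\mathbf{w}_{\mathcal{G}}^t)-F_k(\mathbf{w}_k^t)\bigr)$. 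Your appeal to ``convexity of each $F_j$ and the decomposition $F=\sum_k e_kF_k$'' does not close this: Jensen applied to $F_j(\sum_k e_k\mathbf{w}_k^t)$ produces cross terms $F_j(\mathbf{w}_k^t)$ with $j\neq k$ over which the per-user recursion gives no control. The paper instead first proves the iterate-drift bound
\begin{align*}
\mathbb{E}\Bigl(\sum_{k\in\mathcal{K}}e_k\bigl\Vert\mathbf{w}_{\mathcal{G}}^{t}-\mathbf{w}_k^{t}\bigr\Vert\Bigr)\leq\sum_{k\in\mathcal{K}}e_k\Phi_k^t\eta G,
\end{align*}
obtained by writing $\mathbf{w}_{\mathcal{G}}^{t-1}-\mathbf{w}_k^t$ as the sum of the $\phi_k^t$ local updates, using $\mathbb{E}\Vert X-\mathbb{E}X\Vert\le\mathbb{E}\Vert X\Vert$ and the bounded-gradient assumption, and then combines this with the first-order convexity inequality and Cauchy--Schwarz to get $\sum_k e_k(F_k(\mathbf{w}_{\mathcal{G}}^t)-F_k(\mathbf{w}_k^t))\le\sum_k e_k\Phi_k^t\eta G^2$. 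That drift term is precisely the source of the $-\eta G^2$ inside the bracket of \eqref{theorem}; your proposal never derives a bound on $\Vert\mathbf{w}_{\mathcal{G}}^t-\mathbf{w}_k^t\Vert$.

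A secondary issue is where you locate the $\eta G^2$ contribution. You attribute it to the accumulated per-step SGD noise $\sum_{\tau<\phi_k^t}(1-c_k)^\tau\mathcal{O}(\eta_k^2G^2)$, but with $\eta_k=\eta/|\mathcal{M}_k|$ and $\phi_k^t=\Phi_k^t|\mathcal{M}_k|$ this collapses to order $\Phi_k^t\eta\eta_kG^2$, which is smaller than the theorem's $\Phi_k^t\eta G^2$ by a factor $\eta_k$ — so even with careful bookkeeping it cannot reproduce the stated bound, while the aggregation discrepancy it was meant to replace remains unaccounted for. Adding the drift bound above as an explicit lemma, and routing the $G^2$ term through it rather than through the per-step variance, would repair the argument and bring it in line with the paper's proof.
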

	\begin{proof}\let\qed\relax
		See Appendix~\ref{proof-theorem}. 	$\hfill\blacksquare$
	\end{proof}

	\begin{remark}
		In Theorem 1, $\mathbb{E}\left\{F\left(\mathbf{w}_{\mathcal{G}}^{t}\right)\right\}	-F\left(\mathbf{w}_{\mathcal{G}}^{t-1}\right)$ measures the decrease of the global loss function $F(\mathbf{w})$ in the $t$-th round. The larger $\underset{k\in\mathcal{K}}{\sum}e_{k}\Phi_{k}^t$ is, the faster the global loss function decreases. To this end, $\sum\limits _{k\in\mathcal{K}}e_{k}\Phi_{k}^t$ can be used to measure the training effect within a round. 
	\end{remark}

	In light of \textbf{Remark 1}, we define the new metric, termed Weighted Global Proportion of Trained Mini-batches (WGPTM), to measure the convergence of WFL depending on the LPTMs of individual users within a WFL round. In the $t$-th WFL round, the WGPTM, denoted by $\Phi^t$, is the weighted sum of the LPTMs, $\Phi_{k}^t,\,\forall k\in \mathcal{K}$, as given by
	\begin{align}
		\Phi^t =\sum\limits _{k\in\mathcal{K}}e_{k}\Phi_{k}^t.\label{eq:U}
	\end{align}
Different from energy efficiency~\cite{9264742} or communication rate~\cite{9796982}, the WGPTM captures the impact of both communication and computation resources on the WFL convergence.
		
	\section{Proposed Joint Allocation of Power and Bandwidth in MC-NOMA WFL}\label{section:problem}
	In this paper, we aim to maximize the WGPTM of the MC-NOMA WFL per WFL round, by optimizing the power and bandwidth allocation in the round, subject to the delay, power, and bandwidth constraints.
	The problem is formulated as
	\begin{subequations}
		\begin{align}
			\text{\bf{P1}}:  &\mathop {\max }\limits_{\mathbf{p}^t_{n},B^t_{n},\,\forall n} \; \Phi^t   \\
			\mathrm{s.t.} \quad
			&\sum\limits_{n = 1}^N {{B^t_{n}}}  \leq  \;B, \label{constraints a} \\
			&{B^t_{n}}\;\ge 0 ,\forall n=1,\cdots,N,\label{constraints e}\\
			&0\; \le p^t_{n,i} \le  \;P_{\max },\forall n=1,\cdots,N, i=1, \cdots, |\mathcal{K}^t_{n}|,\label{constraints b}\\	
			&{T_{k}^t}  \le   \;T,\forall k \in {\mathcal {K},}\label{constraints c} \\
            &\text{\eqref{eq:T_u}},  \notag 
		\end{align}
	\end{subequations}
	where $P_{\max}$ is the maximum transmit power of a user, and $T$ is the duration of a WFL round. Constraint (\ref{constraints a}) ensures the total bandwidth budget is no wider than $B$. Constraint (\ref{constraints e}) ensures that the individual bandwidth is non-negative. {\color{blue}\eqref{constraints b} specifies that  in the $t$-th WFL round, the transmit power of the $i$-th user in subchannel $n$ is non-negative and also must not exceed the maximum transmit power, $P_{\max }$. \eqref{constraints c} indicates that each user needs to complete a round of local model training and uploading within the round duration of $T$.}
	Constraint \eqref{eq:T_u} specifies 
	the relationship between the transmit powers of the users, i.e., ${\mathbf{p}^t_{n}}$, and the bandwidths of the subchannels, i.e., ${B^t_{n}}$, according to the SIC decoding of NOMA.

	Problem \textbf{P1} is non-convex with respect to $\mathbf{p}_n^t$ and mathematically intractable. Specifically, ${\mathbf{p}}^t_{n}$ and $B^t_{n}$ are tightly coupled, and the objective $\Phi^t$ is non-convex with respect to ${\mathbf{p}^t_{n}}$. Moreover, constraint \eqref{eq:T_u} is concave in $p_{n,j},\forall j<i$, and convex in $p_{n,i}$. As a result, Problem {\bf{P1}} cannot be solved using existing optimization tools, e.g., CVX toolbox~\cite{cvx}.
	
	By substituting (\ref{eq:fl delay})--(\ref{iteration}) into \eqref{eq:U}, we rewrite the WGPTM in Problem \textbf{P1} as 
	\begin{align}
		\Phi^t =
		&
		\frac{\sum\limits_{k\in\mathcal{K}}\phi_{k}^t}{\underset{k\in\mathcal{K}}{\sum}|\mathcal{M}_{k}|}=
		\frac{\sum\limits _{k\in\mathcal{K}}T_{{\rm c},k}^t\beta_{k}}{\alpha\underset{k\in\mathcal{K}}{\sum}|\mathcal{M}_{k}|}=\frac{1}{{\alpha \sum\limits _{k\in\mathcal{K}}{\left| {{{\cal M}_k}} \right|} }}{\sum\limits _{k\in\mathcal{K}} {\left[ {({T_{k}^t} - T_{{\rm{u}},k}^t- T_{{\rm{d}}}^t)  {\beta_k}} \right]} }.\label{eq:U2}
	\end{align}
We notice that $\Phi^t\left(T_{k}^t\right)$ increases with $T_{k}^t$. Hence, $T_{k}^t=T$, $\forall k \in {\cal K}$, must be taken at the optimal solution to {\bf{P1}}. Since the downlink delay $T_{{\rm d}}^t$ is constant, Problem {\bf{P1}} can be rewritten as
		\begin{align}
			\notag\text{\bf{P2}}: &\quad\mathop {\min }\limits_{{\mathbf{p}^t_{n}},B^t_{n},\,\forall n} \sum\limits _{k\in\mathcal{K}} {T_{{\rm{u}},k}^t} {\beta_k}   
			\quad \mathrm{s.t.} \text{ \eqref{eq:T_u}, (\ref{constraints a}) 
   -- (\ref{constraints b})}.\notag
		\end{align}
Problem \textbf{P2} is still nonconvex and intractable. 
	
	In the rest of this section, we solve the problem optimally by first using variable substitution to 
	rewrite Problem \textbf{P2}, and then applying Cauthy's inequality to convert the rewritten problem equivalently to minimize the tight lower bound of its objective. The equality condition of Cauthy's inequality is exploited to establish the dependence of the variables at the optimum of the rewritten problem, thereby substantially reducing the number of variables per subchannel $n$ from $|\mathcal{K}_n|+1$ (corresponding to the bandwidth of the subchannel and the transmit powers of the $|\mathcal{K}_n|$ users)  to two (corresponding to the bandwidth and the transmit power of the user with the strongest channel gain). 
	The resultant problem with $2N$ variables is convex and can be efficiently solved using off-the-peg CVX toolboxes. For brevity of notation, we suppress the subscript ``$^t$'' below.

	\subsection{Variable Substitution}
	Define $W_{n}\left(\mathbf{p}_{n},B_{n}\right)=\sum\limits_{{k}\in {\cal K}_{n}} T_{{\rm{u}},{k}}  {\beta_{k}}$.  
	$\sum\limits _{k\in\mathcal{K}} { T_{{\rm{u}},{k}} {\beta_k}}=\sum\limits_{n= 1}^N W_{n}\left(\mathbf{p}_{n},B_{n}\right)$. Problem \textbf{P2} is recast as
		\begin{align}
			\notag	\text{\bf{P3}}:  &\mathop {\min }\limits_{{\mathbf {p}}_{n},B_{n},\,\forall n} \sum\limits_{n = 1}^N W_{n}\left(\mathbf{p}_{n},B_{n}\right) 			
			 \quad \mathrm{s.t. } \text{ \eqref{eq:T_u}, (\ref{constraints a}) -- (\ref{constraints b})}.  \notag
		\end{align}
	By substituting \eqref{eq:T_u}, $W_{n}\left(\mathbf{p}_{n},B_{n}\right)$ can be rewritten as
	\begin{align}
		W_{n}(&\mathbf{p}_{n},B_{n})	
		=\sum\limits_{{k}\in {\cal K}_{n}} T_{{\rm{u}},{k}}  {\beta_{k}}
		=  \sum\limits_{i = 1}^{|\mathcal{K}_{n}|} {S{\beta_{\mathcal{K}_{n}(i)}}}\Big/
  {{B_{n}{\log }_2}\bigg( {\frac{{\sum\limits_{j=1}^i {{{\left| {h_{n,j}} \right|}^2}p_{n,j} + {N_0}{B_{n}}} }}{{\sum\limits_{j = 1}^{i - 1} {{{\left| {h_{n,j}} \right|}^2}p_{n,j} + {N_0}{B_{n}}} }}} \bigg)}. \label{eq: W1_1}
	\end{align}
	To assess the monotonicity of $W_{n}\left(\mathbf{p}_{n},B_{n}\right)$ in $p_{n,|\mathcal{K}_{n}|}$, we reorganize \eqref{eq: W1_1} into
	{\small \begin{align}
	  \!\!\!   W_{n}(\mathbf{p}_{n},B_{n})=\sum\limits_{i = 1}^{|\mathcal{K}_{n}|-1} {\frac{{S{\beta_{\mathcal{K}_{n}(i)}}}}{{{B_{n}{\log }_2}\bigg( {\frac{{\sum\limits_{j=1}^i {{{\left| {h_{n,j}} \right|}^2}p_{n,j} + {N_0}{B_{n}}} }}{{\sum\limits_{j = 1}^{i - 1} {{{\left| {h_{n,j}} \right|}^2}p_{n,j} + {N_0}{B_{n}}} }}} \bigg)}}}+
	    \frac{{S{\beta_{\mathcal{K}_{n}(|\mathcal{K}_{n}|)}}}}{{{B_n{\log }_2}\bigg( {\frac{{{{\left| {h_{n,|\mathcal{K}_{n}|}} \right|}^2}p_{n,|\mathcal{K}_{n}|} +\sum\limits_{j = 1}^{|\mathcal{K}_{n}|-1} {{{\left| {h_{n,j}} \right|}^2}p_{n,j}+ {N_0}{B_{n}}} }}{{\sum\limits_{j = 1}^{|\mathcal{K}_{n}|-1} {{{\left| {h_{n,j}} \right|}^2}p_{n,j}+ {N_0}{B_{n}}} }}} \bigg)}}.\label{eq:W1}
	\end{align}}
	  We observe that $W_{n}\left(\mathbf{p}_{n},B_{n}\right)$ monotonically decreases with the increase of $p_{n,|\mathcal{K}_{n}|}$. Moreover, Problem \textbf{P3} is decoupled between $\mathbf{p}_n$ for $n=1,\cdots,N$ since the transmit powers of the users assigned to the different subchannels are independent.
	For this reason, the optimal transmit power of user $\mathcal{K}_{n}(|\mathcal{K}_{n}|)$, denoted by $p^*_{n,|\mathcal{K}_{n}|}$, which minimizes $W_{n}\left(\mathbf{p}_{n},B_{n}\right)$ takes its maximum $P_{\max}$: 
	\begin{align}
		p_{n,|\mathcal{K}_{n}|}^{* } = {P_{\max }}.\label{eq:pK}
	\end{align}
	
	To improve the tractability of (\ref{eq:W1}), we define $A_i\big(\mathbf{p}_n(i),B_n\big),\, i=0,1,\cdots, |\mathcal{K}_n|$ as
	\begin{equation}
		\label{eq:APi}
		A_{i}(\mathbf{p}_n(i),B_n)=\left\{
		\begin{aligned}
			&{{\log }_2}\left( {{N_0}{B_{n}}} \right),\quad\quad\quad\quad i =0;\\
			&\log_{2}\bigg(\stackrel[j=1]{i}{\sum}\left|h_{n,j}\right|^{2}p_{n,j}+N_{0}B_{n}\bigg), \quad\quad i=1,\cdots,|\mathcal{K}_{n}|,\\
		\end{aligned}
		\right.
	\end{equation}
	where $\mathbf{p}_n(i)$ 
	collects the first $i$ elements of $\mathbf{p}_n$, i.e., $\mathbf{p}_n(i)=\{p_{n,1},\cdots,p_{n,i}\}$ and $\mathbf{p}_n(0)=\varnothing$. 
	
	According to \eqref{eq:APi}, the transmit power of the $i$-th user assigned to the $n$-th subchannel, $p_{n,i}$, can be rewritten using $A_{i-1}(\mathbf{p}_n(i-1),B_n)$ and $A_{i}(\mathbf{p}_n(i),B_n)$, as given by
	\begin{align}
	    p_{n,i}=\frac{2^{A_{i}(\mathbf{p}_n(i),B_{n})}-2^{A_{i-1}(\mathbf{p}_n(i-1),B_{n})}}{|h_{n,i}|^2}, \quad  i=1,\cdots,|\mathcal{K}_n|.\label{p_ni}
	\end{align}
	
  	Based on \eqref{eq:APi}, we can establish the recurrence expression for  $A_{i}(\mathbf{p}_n(i),B_n)$, as given by
	\begin{subequations}\label{A_i,i-1}
	\begin{align}
		A_{i}(\mathbf{p}_n(i),B_n)&=\log_{2}\Big(\stackrel[j=1]{i}{\sum}\left|h_{n,j}\right|^{2}p_{n,j}+N_{0}B_{n}\Big) \label{eq: A_i,i-1,1}\\
		&=\log_{2}\left(\left|h_{n,i}\right|^{2}p_{n,i}+2^{A_{i-1}(\mathbf{p}_n(i-1),B_n)}\right),\label{eq: A_i,i-1,3}
	\end{align}
	\end{subequations}
	where $i=1,\cdots,|\mathcal{K}_{n}|$. Here, \eqref{eq: A_i,i-1,3} is based on the definition of $A_{i}(\mathbf{p}_n(i),B_n)$ in \eqref{eq:APi}.
	
	When $i=|\mathcal{K}_n|$, we substitute \eqref{eq:pK} into \eqref{A_i,i-1}, and obtain 
	\begin{align}
	 A_{|\mathcal{K}_{n}|}(\{\mathbf{p}_{n}(|\mathcal{K}_n|-1),p_{n,|\mathcal{K}_{n}|}^{* }\},B_{n})
	=	\log_2({\left| {h_{n,|\mathcal{K}_{n}|}} \right|}^2 P_{\max}+2^{A_{|\mathcal{K}_n|-1}(\mathbf{p}_{n}(|\mathcal{K}_n|-1),B_{n})}).\label{A_K}
	\end{align}
	
	For  brevity of notation, we define
	\begin{equation}
	 A_{i}(n)= A_{i}(\mathbf{p}_n(i),B_n),\,i=0,1,\cdots, |\mathcal{K}_{n}|-1,
	\end{equation}
	with $A_{|\mathcal{K}_{n}|}(n)$ given in \eqref{A_K}. 
	 $A_{|\mathcal{K}_{n}|}(n)$ depends on $A_{|\mathcal{K}_{n}|-1}(n)$.
  According to \eqref{eq:APi},  $	\forall i=1,\cdots,|\mathcal{K}_{n}|$,
	\begin{align}
	\log_2\Big(\big({\sum\limits_{j=1}^i {{{\left| {h_{n,j}} \right|}^2}p_{n,j} + {N_0}{B_{n}}} }\big)\Big/\big({\sum\limits_{j = 1}^{i - 1} {{{\left| {h_{n,j}} \right|}^2}p_{n,j} + {N_0}{B_{n}}} }\big)\Big)&=A_{i}(n)-A_{i-1}(n).
	\label{A-A}
	\end{align}	
	By substituting \eqref{A-A} into \eqref{eq:W1}, 
	$W_n(\mathbf{p}_n,B_n)$ is rewritten as
	\begin{align}
		\notag  W_{n}\Big(\big\{\mathbf{p}_{n}(|\mathcal{K}_{n}|-1),p^*_{n,|\mathcal{K}_{n}|}\big\},B_n\Big)
		&=\underset{=w_1^{n}}{\underbrace{\frac{S}{{{B_{n}}}} \frac{{{\beta_{\mathcal{K}_{n}(|\mathcal{K}_{n}|)}}}}{{	A_{|\mathcal{K}_{n}|}(n)- A_{|\mathcal{K}_{n}|-1}(n)}}}}
		+\underset{=z_1^{n}}{\underbrace{\frac{S}{B_{n}}  \sum\limits_{i= 1}^{|\mathcal{K}_{n}|-1} {\frac{\beta_{\mathcal{K}_n(i)}}{{A_i(n)  - A_{i-1}(n) }}} }}\\
  &\triangleq W_{n}\left({\cal A}_n,B_{n}\right),\label{eq:W2}
	\end{align}
where 
${\cal A}_n=\{ A_1(n), \cdots,A_{|\mathcal{K}_n|-1}(n)\}$.
Moreover, $w_1^{n}$ and $z_1^{n}$ are defined for brevity of notation. As shown, $w_1^{n}$ depends on $A_{|\mathcal{K}_{n}|-1}(n)$ and $B_{n}$, while $z_1^{n}$ depends on 
${\cal A}_n$ and $B_{n}$.

	As a result, Problem \textbf{P3} can be rewritten as
	\begin{align}
			\notag	\text{\bf{P4}}:  &\quad\mathop {\min }\limits_{{\cal A}_{n},B_{n},\forall n }\sum\limits_{n = 1}^N W_{n}\left({\cal A}_n,B_{n}\right) \quad
			\mathrm{s.t.}
			\quad\text{ (\ref{constraints a}) -- (\ref{constraints b}), \eqref{eq:APi}, \eqref{eq:W2}}, \notag
		\end{align}
 where the newly added constraint \eqref{eq:APi} preserves the mapping between $p_n$ and $\mathcal{A}_n$ during the variable substitution, and \eqref{eq:W2} specifies $W_n({\cal A}_n,B_{n})$. 
The objective of Problem \textbf{P4} is non-convex in $\mathcal{A}_n$ since the $i$-th element in $z_1^n$, i.e., $\frac{\beta_{\mathcal{K}_n(i)}}{{A_i(n)  - A_{i-1}(n) }}$, is concave in $ A_{i-1}(n)$ and convex in $A_i(n)$. Moreover, constraints~\eqref{constraints b} and~\eqref{eq:APi} are intractable as $\mathcal{A}_n$ is not a liner mapping of~$\mathbf{p}_n$.
	
	\subsection{Variable Reduction by Cauchy's Inequality}
	We proceed to reduce the number of optimization variables and, in turn, improve the tractability of Problem \textbf{P4}. To do this, we first provide the following lemma.
	\begin{proposition}		
	The following inequality always holds
			\begin{align}
				\sum\limits _{i=1}^{|\mathcal{K}_{n}|-1}\frac{\beta_{\mathcal{K}_{n}(i)}}{A_i({n})-A_{i-1}({n})}
				\ge\bigg(\sum\limits _{i=1}^{|\mathcal{K}_{n}|-1}\sqrt{\beta_{\mathcal{K}_{n}(i)}}\bigg)^{2}\bigg/\bigg(A_{|\mathcal{K}_{n}|-1}({n})-A_0({n})\bigg),\label{eq:cauchy2}
			\end{align}
		where the equality is taken only when  ${\cal A}_n$ satisfies
		\begin{align}
				 \frac{A_{|\mathcal{K}_{n}|-1}({n})-A_0({n})}{\sum\limits _{i=1}^{|\mathcal{K}_{n}|-1}\sqrt{\beta_{\mathcal{K}_{n}(i)}}}
			= \frac{{A_i({n})-A_{i-1}({n})}}{{\sqrt {{\beta_{\mathcal{K}_{n}(i)}}} }} \triangleq Q_n, 
			i=1,\cdots,|\mathcal{K}_{n}|-1.\label{equality conditions}
		\end{align}
	\end{proposition}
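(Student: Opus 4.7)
The plan is to recognize the inequality as a direct instance of the Cauchy--Schwarz inequality in Engel (Titu) form, namely $\sum_i x_i^2/y_i \ge (\sum_i x_i)^2/\sum_i y_i$ for $y_i>0$, with equality iff $x_i/y_i$ is constant in $i$. First I would identify the substitution $x_i=\sqrt{\beta_{\mathcal{K}_n(i)}}$ and $y_i=A_i(n)-A_{i-1}(n)$ for $i=1,\dots,|\mathcal{K}_n|-1$, so that the left-hand side of~\eqref{eq:cauchy2} matches $\sum_i x_i^2/y_i$ exactly.

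Next I would verify the positivity hypothesis $y_i>0$. From the definition~\eqref{eq:APi} together with~\eqref{eq: A_i,i-1,3}, one has $2^{A_i(n)}=|h_{n,i}|^2 p_{n,i}+2^{A_{i-1}(n)}$, so $A_i(n)-A_{i-1}(n)=\log_2\bigl(1+|h_{n,i}|^2 p_{n,i}/2^{A_{i-1}(n)}\bigr)\ge 0$, with strict positivity whenever $p_{n,i}>0$ (the degenerate case $p_{n,i}=0$ can be excluded since such a user would not transmit and could be removed from $\mathcal{K}_n$, or handled by a limiting argument). This legitimises dividing by $y_i$.

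Then I would exploit the telescoping of the denominator: $\sum_{i=1}^{|\mathcal{K}_n|-1}\bigl(A_i(n)-A_{i-1}(n)\bigr)=A_{|\mathcal{K}_n|-1}(n)-A_0(n)$, which is precisely the denominator appearing on the right-hand side of~\eqref{eq:cauchy2}. Combining the Engel-form bound with this telescoping identity yields~\eqref{eq:cauchy2} immediately.

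Finally, for the equality condition, Titu's lemma gives equality if and only if $x_i/y_i$ is independent of $i$, i.e.\ $\sqrt{\beta_{\mathcal{K}_n(i)}}/\bigl(A_i(n)-A_{i-1}(n)\bigr)$ is constant. Inverting this constant and using the telescoping identity once more to identify it with $\bigl(A_{|\mathcal{K}_n|-1}(n)-A_0(n)\bigr)/\sum_{i=1}^{|\mathcal{K}_n|-1}\sqrt{\beta_{\mathcal{K}_n(i)}}$ recovers the equality condition~\eqref{equality conditions} and the definition of $Q_n$. There is no real obstacle here beyond checking the positivity of the $y_i$ and the telescoping; if a self-contained derivation of the Engel form is desired, one can obtain it in one line from Cauchy--Schwarz applied to the vectors $(x_i/\sqrt{y_i})_i$ and $(\sqrt{y_i})_i$, with the equality case read off from the Cauchy--Schwarz equality case (proportional vectors).
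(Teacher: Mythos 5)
Your proposal is correct and is essentially the paper's own argument: the paper applies Cauchy's inequality to the vectors $a_i=\sqrt{A_i(n)-A_{i-1}(n)}$ and $b_i=\sqrt{\beta_{\mathcal{K}_n(i)}/(A_i(n)-A_{i-1}(n))}$, which is exactly the one-line derivation of the Engel form you mention, together with the same telescoping of $\sum_i\bigl(A_i(n)-A_{i-1}(n)\bigr)$ and the same reading of the equality case. Your explicit check that $A_i(n)-A_{i-1}(n)>0$ is a small addition the paper leaves implicit.
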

	\begin{proof}\let\qed\relax
		Based on Cauchy's inequality, we have
		\begin{align}
			\bigg(\sum\limits _{i=1}^{|\mathcal{K}_{n}|-1}\left(a_{i}\right)^{2}\bigg)\bigg(\sum\limits _{i=1}^{|\mathcal{K}_{n}|-1}\left(b_{i}\right)^{2}\bigg)\geq\bigg(\sum\limits _{i=1}^{|\mathcal{K}_{n}|-1}a_{i}b_{i}\bigg)^{2},\label{cauchy}
		\end{align}
		which takes equality when $\frac{a_1}{b_1}=\cdots = \frac{a_{|\mathcal{K}_n|-1}}{b_{|\mathcal{K}_n|-1}}=\frac{\sum_{i=1}^{|\mathcal{K}_n|-1} a_i}{\sum_{i=1}^{|\mathcal{K}_n|-1} b_i}$.
		
		Since 
		$	a_i=\sqrt{{A_i({n})-A_{i-1}({n})} }$, then {\color{blue}$\sum\limits _{i=1}^{|\mathcal{K}_{n}|-1}\left(a_{i}\right)^{2}    =\sum\limits _{i=1}^{|\mathcal{K}_{n}|-1}A_i(n)-\sum\limits _{i=1}^{|\mathcal{K}_{n}|-1}A_{i-1}(n)=A_{|\mathcal{K}_{n}|-1}(n)-A_0(n)$.} Also recall 
		$	b_i=\sqrt{\frac{\beta_{\mathcal{K}_{n}(i)}}{{A_i({n})-A_{i-1}({n})}}}$. 
		Then, (\ref{cauchy}) can be rewritten as 
		\begin{align}
			\left(A_{|\mathcal{K}_{n}|-1}\left(n\right)-A_0(n)\right)\times&\sum\limits _{i=1}^{|\mathcal{K}_{n}|-1}\frac{\beta_{\mathcal{K}_n(i)}}{A_i(n)-A_{i-1}(n)}
			\ge\bigg(\sum\limits _{i=1}^{|\mathcal{K}_{n}|-1}\sqrt{\beta_{\mathcal{K}_n(i)}}\bigg)^{2},\label{cauchy3}
		\end{align}
		where the equality is taken in \eqref{cauchy3} only when \eqref{equality conditions} is satisfied. 
		By dividing both sides of (\ref{cauchy3}) by $\left(A_{|\mathcal{K}_{n}|-1}\left({n}\right)-A_0({n})\right)$, (\ref{eq:cauchy2}) is finally obtained and the lemma is proved.
		$\hfill\blacksquare$
	\end{proof}
	
	By multiplying both sides of \eqref{eq:cauchy2} by $\frac{S}{B_n}$, we have
	\begin{align}
			z_1^n=&	\frac{S}{B_n}\sum\limits _{i=1}^{|\mathcal{K}_{n}|-1}\frac{\beta_{\mathcal{K}_{n}(i)}}{A_i({n})-A_{i-1}({n})}
		\ge \frac{S}{B_n}\frac{\Big(\sum\limits_{i=1}^{|\mathcal{K}_{n}|-1}\sqrt{\beta_{\mathcal{K}_{n}(i)}}\Big)^{2}}{A_{|\mathcal{K}_{n}|-1}({n})-A_0({n})}\triangleq z_2^n.\label{Z_2}
	\end{align}
	By substituting \eqref{Z_2} into (\ref{eq:W2}), the minimum of $W_{n}$ can be obtained, as given by
	\begin{align}
		{W_{n}}\left({\cal A}_{n},B_{n}\right) & =	w_1^{n}+z_{1}^{n}\ge 	w_1^{n}+z_{2}^{n}
		\triangleq Z_n\left({A_{|\mathcal{K}_{n}|-1}({n})},B_{n}\right),
		\label{eq:W3}
	\end{align}
	where the equality is taken when (\ref{equality conditions}) holds. In other words,  $Z_n\left({A_{|\mathcal{K}_{n}|-1}({n})},B_{n}\right)$ is the minimum of ${W_{n}}\left({\cal A}_{n},B_{n}\right) $, and is taken when \eqref{equality conditions} holds. By reorganizing \eqref{equality conditions},  at the minimum of ${W_{n}}\left({\cal A}_{n},B_{n}\right) $, $A_i({n})$ can be rewritten as 
	\begin{subequations}
	   \begin{align}  A_{i}({n}) &= 
		{\sum\limits_{j = 1}^i {\sqrt {{\beta_{\mathcal{K}_n(j)}}} }{Q_n} }+A_{0}(n)\label{A_Q}\\
		=	&{\sum\limits_{j = 1}^i {\sqrt {{\beta_{\mathcal{K}_n(j)}}} }{\frac{A_{|\mathcal{K}_{n}|-1}({n})-A_0({n})}{\sum\limits _{i=1}^{|\mathcal{K}_{n}|-1}\sqrt{\beta_{\mathcal{K}_{n}(i)}}}} }+A_{0}({n}).		\label{A_ii}
	\end{align}
	\end{subequations}
	Here, $A_i({n}),\forall i=1,\cdots,|\mathcal{K}_n|-1$, is expressed as a function of only $A_{|\mathcal{K}_{n}|-1}({n})$. We can evaluate the optimal values of all $A_i({n}),\forall i=1,\cdots,|\mathcal{K}_n|-1$, provided the optimum of $A_{|\mathcal{K}_{n}|-1}({n})$ avails.
	
	We sum up (\ref{eq:W3}) for all subchannels $n=1,\cdots, N$. Then,
	\begin{align}
		\sum\limits_{n = 1}^N {{W_{n}}}\left({\cal A}_n,B_{n}\right)   \ge \sum\limits_{n = 1}^N Z_n\left(A_{|\mathcal{K}_{n}|-1}({n}),B_{n}\right),\label{eq:R17}
	\end{align}
	where the equality is taken when (\ref{equality conditions}) is satisfied $\forall n$. 
	In other words, $\sum\limits_{n = 1}^N Z_n\left(A_{|\mathcal{K}_{n}|-1}({n}),B_{n}\right)$ is the minimum of $\sum\limits_{n = 1}^N {{W_{n}}}\left({\cal A}_n,B_{n}\right)$ and is taken when \eqref{A_ii} holds $\forall i\leq |\mathcal{K}_n|-1, \forall n$.  

Finally, Problem \textbf{P4} is rewritten as  
	\begin{subequations}
		\begin{align}
			\text{{\bf{P5}}}:& \quad\mathop{\min}   \limits_{{A_{|\mathcal{K}_{n}|-1}({n}),{B_{n}},\forall n}} \sum\limits_{n = 1}^N {{Z_n}\left(A_{|\mathcal{K}_{n}|-1}({n}),B_{n}\right)}  \\
			\mathrm{s.t.}
			  \quad &A_{|\mathcal{K}_{n}|-1}({n}) \geq \log_{2}\left(N_{0}B_{n}\right),\,\forall n=1,\cdots,N, \label{p5:constraints1}\\
			&A_{|\mathcal{K}_{n}|-1}({n}) \leq \log_{2}\bigg(\stackrel[j=1]{|\mathcal{K}_{n}|-1}{\sum}\left|h_{n,j}\right|^{2}P_{\max}+N_{0}B_{n}\bigg),\forall n=1,\cdots,N,\label{p5:constraints3}\\
			&Z_n\left({A_{|\mathcal{K}_{n}|-1}({n})},B_{n}\right)=	w_1^{n}+z_{2}^{n},\label{p5:constraint4}\\
   \quad&\text{(\ref{constraints a}), \eqref{constraints e}},\notag
		\end{align}
	\end{subequations}
	where \eqref{p5:constraints1} provides the lower bound of $A_{|\mathcal{K}_{n}|-1}({n})$ and obtained by substituting \eqref{eq:APi} into $p_{n,i}>0,\, \forall i=1,\cdots,|{\cal K}_n|-1$ in \eqref{constraints b}. Constraint {\eqref{p5:constraints3} is the upper bound of $A_{|\mathcal{K}_{n}|-1}({n})$} and is obtained by substituting \eqref{eq:APi}  into  $p_{n,i}<P_{\max},\,\forall i=1,\cdots,|{\cal K}_n|-1$ in  \eqref{constraints b}.  Constraint \eqref{p5:constraint4} is equivalent to \eqref{eq:W2} since $W_{n}\left({\cal A}_n,B_{n}\right)$ and
	$Z_n(A_{|\mathcal{K}_{n}|-1}({n}),B_{n})$ take the same minimum.
		
	\begin{proposition}$Z_n(A_{|\mathcal{K}_{n}|-1}({n}),B_{n})=w_1^{n}+z_2^{n}$ is convex in both $A_{|\mathcal{K}_{n}|-1}({n})$ and $B_{n}$ due to the positive semi-definite Hessian matrices of both $w_1^{n}$ and $z_2^{n}$.
	\end{proposition}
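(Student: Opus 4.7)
The plan is to use additivity of convexity: since $Z_n = w_1^n + z_2^n$ and convexity is preserved under addition, it suffices to show that each summand, viewed as a function of $(x,y) := (A_{|\mathcal{K}_n|-1}(n), B_n)$, has a positive semi-definite $2 \times 2$ Hessian on the feasible domain carved out by \eqref{p5:constraints1}--\eqref{p5:constraints3}. After substituting $A_0(n) = \log_2(N_0 B_n)$ and \eqref{A_K} into \eqref{Z_2} and \eqref{eq:W2}, each summand has the structure ``positive constant divided by $y\,g(x,y)$''; explicitly, $z_2^n$ corresponds to $g(x,y) = x - \log_2(N_0 y)$, and $w_1^n$ corresponds to $g(x) = \log_2\bigl(1 + |h_{n,|\mathcal{K}_n|}|^2 P_{\max}\,2^{-x}\bigr)$, which depends on $x$ alone.

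For each summand I will compute the Hessian entries $\partial_{xx}$, $\partial_{xy}$, $\partial_{yy}$ by the chain rule and then verify Sylvester's criterion: both diagonal entries nonnegative and the determinant nonnegative. Positivity of the diagonal entries follows from separate convexity of $1/(y\,g)$ in each coordinate---$1/y$ is convex in $y$, while $1/g$ is convex in $x$ (the inner $g$ is linear in $x$ for $z_2^n$; for $w_1^n$ a short one-variable check using $g'(x) = -\tfrac{|h|^2 P_{\max}\,2^{-x}}{1 + |h|^2 P_{\max}\,2^{-x}}$ confirms $g''(x) > 0$, and hence $1/g$ is convex). The non-negativity of the determinant is the load-bearing step, and I plan to handle it by factoring out the common powers of $y$ and of the inner denominator, then reducing to an algebraic inequality in dimensionless quantities, namely $u := x - \log_2(N_0 y)$ (for $z_2^n$) or $r := |h|^2 P_{\max}\,2^{-x}/(1 + |h|^2 P_{\max}\,2^{-x}) \in (0,1)$ (for $w_1^n$), together with the constant $1/\ln 2$.

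The main obstacle is the determinant inequality for $z_2^n$, because $x$ and $y$ are coupled inside $u$, producing a non-trivial cross-derivative $\partial_{xy}z_2^n$ that absorbs the $-1/(y\ln 2)$ contribution from $\partial_y u$. Dominating $(\partial_{xy}z_2^n)^2$ by $\partial_{xx}z_2^n\cdot\partial_{yy}z_2^n$ reduces, after cancellation, to a polynomial non-negativity statement in $u$ and $1/\ln 2$ that I expect to close by completing the square together with the feasibility constraint $u > 0$ inherited from \eqref{p5:constraints1}. By contrast, the analogous step for $w_1^n$ is comparatively light because its $y$-dependence is only through the separable factor $1/y$, which makes $\partial_{xy}w_1^n$ a single clean term and reduces the determinant condition to a one-dimensional log-convexity check on $1/g(x)$.
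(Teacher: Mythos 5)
Your overall route coincides with the paper's: decompose $Z_n=w_1^n+z_2^n$, compute the $2\times2$ Hessian of each summand in $(x,y)=\bigl(A_{|\mathcal{K}_n|-1}(n),B_n\bigr)$, and check principal minors; you are in fact slightly more careful than the paper, which only exhibits the Hessian determinants and leaves the (easy) positivity of the diagonal entries implicit. Your $w_1^n$ branch goes through: writing $v=|h_{n,|\mathcal{K}_n|}|^2P_{\max}2^{-x}$, the determinant reduces to a positive multiple of $3v-2\ln(1+v)$, which is positive for all $v>0$ by $\ln(1+v)\le v$ (equivalently, the paper's inequality $x\ln(1+A/x)\le A$), and the same bound handles the diagonal entry.

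The gap is in the $z_2^n$ branch. Writing $f=C/(y\,u)$ with $u=x-\log_2(N_0y)$ and carrying out exactly the computation you outline gives $\det\nabla^2 f=C^2y^{-4}u^{-5}\bigl(3u-\tfrac{2}{\ln 2}\bigr)$. After cancellation the surviving factor is \emph{linear} in $u$ --- there is no square to complete --- and it is strictly negative for $0<u<\tfrac{2}{3\ln 2}\approx 0.96$. Constraint \eqref{p5:constraints1} only guarantees $u\ge 0$, so the feasibility you invoke does not close the inequality: $z_2^n$ is genuinely non-convex on part of the region permitted by \eqref{p5:constraints1}--\eqref{p5:constraints3}. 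The paper escapes this by importing the extra condition $|h_{n,1}|^2p_{n,1}>N_0B_n$ (the received signal power of the weakest user exceeds the noise power), which yields $A_1(n)-A_0(n)=\log_2\bigl(1+|h_{n,1}|^2p_{n,1}/(N_0B_n)\bigr)>1$ and hence $u\ge A_1(n)-A_0(n)>1>\tfrac{2}{3\ln 2}$. You need this hypothesis (or some equivalent positive lower bound on $u$) stated explicitly; with only $u>0$ the determinant test fails and your argument cannot be completed as written.
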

	\begin{proof}\let\qed\relax
		Please see Appendix~\ref{proof-lemma2}. 
	\end{proof}
	
	The objective of Problem \textbf{P5}, $\sum\limits_{n = 1}^N {{Z_n}\left(A_{|\mathcal{K}_{n}|-1}({n}),B_{n}\right)}$, is convex
	with respect to $A_{|\mathcal{K}_{n}|-1}({n})$ and $B_{n}$, since ${{Z_n}\left(A_{|\mathcal{K}_{n}|-1}({n}),B_{n}\right)}$ is convex as stated in \textbf{Lemma 2} and a non-negatively weighted sum of convex functions is convex \cite{boyd2004convex}.
	Moreover, the constraints of Problem {\bf{P5}}, i.e., \eqref{constraints a}, \eqref{constraints e}, \eqref{eq:W3}, \eqref{p5:constraints1}, and \eqref{p5:constraints3}, are convex or linear. All the objective and constraints are differentiable.
	As a result, Problem {\bf{P5}} is convex. The optimal solution to the problem, $A^*_{|\mathcal{K}_{n}|-1}({n})$ and $B_n^{*}$, can be efficiently solved by off-the-peg CVX tools, e.g., interior point method~\cite{cvx}. 
Moreover, there are $2N$ variables in Problem \textbf{P5},  substantially fewer than $K$ in Problem \textbf{P4} (including $N$ variables of $B_n, \forall n$ and $(K-N)$ variables of $\mathcal{A}_n,\forall n$), since ${\cal A}_{n}=\{A_1(n),\cdots,A_{|\mathcal{K}_{n}|-1}(n)\}$ and $K=\sum_{n=1}^N |\mathcal{K}_{n}| \gg N$. This contributes considerably to the simplification of the considered problem. Let alone Problem \textbf{P4} is non-convex and intractable.
	
	By substituting $A^*_{|\mathcal{K}_{n}|-1}({n})$ and $B^{*}_n$ into 
	(\ref{equality conditions}), we can obtain the optimum of $Q_n$, i.e., $Q_n^*$, as
	{\color{blue}\begin{align}
		Q_n^*={\left(A^*_{|\mathcal{K}_{n}|-1}({n})-A_0^*(n)\right)}\Big/{\Big(\sum_{i=1}^{|\mathcal{K}_{n}|-1}\sqrt{\beta_{\mathcal{K}_{n}(i)}}\Big)},\label{Q_n}
	\end{align}}
	where $A_0^*(n)=\log_2(N_0B^*_n)$; see \eqref{eq:APi}.
	According to \eqref{A_Q}, with $A_0^*(n)$, $A^*_{|\mathcal{K}_{n}|-1}({n})$ and $Q_n^*$, we can obtain the optimal $A_{i}({n})$, denoted by $A^*_{i}({n})$, as   
	\begin{align}  A^*_{i}({n}) = &
		{\sum\limits_{j = 1}^i {\sqrt {{\beta_{\mathcal{K}_n(j)}}} }{Q_n^{*}} }+A_{0}^*({n}),
		\forall i \leq |\mathcal{K}_n|-1,\forall
		 n.\label{A_i}
	\end{align}	

	Finally, we can obtain the optimal transmit powers of the users, denoted by ${\bf{p}}_n^{*},\forall n$, by substituting \eqref{A_i} into 
	\eqref{p_ni}, and ${\bf{p}}_n^{*}=\big\{ p_{n,i}^{*}, \forall i=1,\cdots,|\mathcal{K}_n|\big\}$ is given by 
	\begin{align}\label{P_i}
		p_{n,i}^{*}  = \left\{ {\begin{array}{*{20}{c}}
				\min \Bigg\{ {\frac{{{2^{\sum\limits_{j= 1}^i {\sqrt {{\beta_{\mathcal{K}_n(j)}}} }{Q_n^{*}} }} - {2^{\sum\limits_{j = 1}^{i - 1} {\sqrt {\beta_{\mathcal{K}_n(j)}} }{Q_n^{*}} }}}}{{{{\left| {{h_{n,i}}} \right|}^2}}},P_{\max }} \Bigg\},	 i=1,\cdots,|\mathcal{K}_n|-1;	\\
				{P_{\max }},\hfill \quad i=|\mathcal{K}_n|.
		\end{array}} \right.
	\end{align}
	The optimal $ T_{{\rm{u}},{k}} $ and $ T_{{\rm{c}},{k}} $, denoted by $ T^*_{{\rm{u}},{k}} $ and $ T^*_{{\rm{c}},{k}} $, can be evaluated according to ${\bf{p}}_n^{*}$ and $B^{*}_n$. Specifically, $T^*_{{\rm{u}},{k}}$ can be obtained by substituting ${\bf{p}}_n^{*}$ and $B^{*}_n$ into \eqref{eq:T_u}. Then, $ T^*_{{\rm{c}},{k}}=T-T^*_{{\rm{u}},{k}}-T_{\rm{d}}$. 
	In addition to allocating the computing and communication resources, we can also allocate
	the optimal LPTM, $\Phi^*_{k}$, by substituting $T^*_{{\rm{c}},{k}}$ into \eqref{eq:fl delay} and \eqref{iteration}.
	
	\subsection{Algorithm Description and Discussion}
	
	\textbf{Algorithm \ref{Algorithm 2}} summarizes the proposed joint power and bandwidth allocation in the new MC-NOMA WFL system. The complexity of the algorithm is dominated by the convex optimization used to solve Problem \textbf{P5}, since the rest of the algorithm, including \eqref{Q_n}--\eqref{P_i} in Step~4, \eqref{eq:T_u} in Step 5, and \eqref{eq:fl delay} and \eqref{iteration} in Step 7, only involve arithmetic operations and are relatively computationally negligible.
	The worst-case complexity of using a typical convex optimization solver, e.g., interior point method, to solve Problem \textbf{P5} is $\mathcal{O}(\max\{N_c,N_v\}^4\sqrt{N_v})\log(\frac{1}{\epsilon})$~\cite{9387137}, where $N_c$ and $N_v$ are the numbers of constraints and variables, respectively; and $\epsilon$ is the convergence accuracy of the algorithm. For Problem \textbf{P5}, $N_c=3N+1$, accounting for $N$ linear constraints in \eqref{p5:constraints1}, $N$ linear constraints in \eqref{p5:constraints3}, and $N+1$ linear constraints in \eqref{p5:constraint4}. $N_v=2N$, accounting for the $2N$ variables, i.e., $A_{|\mathcal{K}_{n}|-1}({n})$ and ${B_{n}},\forall n=1,\cdots,N$. As a result, the overall complexity of \textbf{Algorithm \ref{Algorithm 2}} is $\mathcal{O}\big(N^{4.5}\log(\frac{1}{\epsilon})\big)$.

	\begin{algorithm}[h]
		\caption{The optimal allocation of MC-NOMA WFL with Flexible Aggregation}
		\label{Algorithm 2}
		\textbf{Initialization:} $\mathcal{K}$, $N$, $B$, ${\cal D} _k$, $T$, $T_{\rm d}$, $P_{\max}$, $\alpha$, $\beta_k$.
		
		\textbf{For any WFL round, i.e., the $t$-th WFL round:}
		
	   \quad \textbf{\% Allocation of power and bandwidth: $p_{n,i}^{* }$ and $B_n^{*}$} 
		
		\begin{algorithmic}[1]
		    \STATE Collect the channel gains of all users in all subchannels; 
			\STATE Assign $K$ users to $N$ subchannels, and obtain $\mathcal{K}_{n}$ and $h_{n,i}$, $\forall i =1,\cdots,|\mathcal{K}_n|,\forall
		 n=1,\cdots,N$;
			\STATE  Solve Problem \textbf{P5}  to obtain $A^*_{|\mathcal{K}_{n}|-1}({n})$ and $B_n^{*}$ with CVX tools~\cite{cvx};
            \STATE Obtain $Q_n^*,A^*_{i}({n})$, and $p_{n,i}^{* }$ using \eqref{Q_n}, \eqref{A_i}, and \eqref{P_i}, respectively.
		
     	\textbf{\% Allocation of communication tasks: $\Phi^*_{k}$}
		
			\STATE Obtain $T^*_{{\rm{u}},{k}}$ by substituting ${\bf{p}}_n^{*}$ and $B^{*}_n$ into \eqref{eq:T_u}; 
			\STATE Evaluate $ T^*_{{\rm{c}},{k}}=T-T^*_{{\rm{u}},{k}}-T_{\rm{d}}$;
			\STATE Obtain $\Phi^*_{k}$ by substituting $T^*_{{\rm{c}},{k}}$ into \eqref{eq:fl delay} and \eqref{iteration}.
		\end{algorithmic}  
   {\color{blue}\textbf{Output:} $p_{n,i}^{* }$, $B_n^{*}$, and $\Phi^*_{k}$.}
	\end{algorithm}
	
\section{Numerical Results}\label{section: simulation}
In this section, extensive simulations are carried out to test the new MC-NOMA WFL with Flexible Aggregation and the proposed \textbf{Algorithm \ref{Algorithm 2}} on image classification tasks. 
\vspace{-3mm }
\subsection{Compared Algorithms}\label{setups}
As discussed in Section~\ref{convergence}, there is no directly comparable technique in the literature, due to the new design of MC-NOMA WFL with Flexible Aggregation.
We test several alternatives to the proposed \textbf{Algorithm \ref{Algorithm 2}} to assess the gains of MC-NOMA, Flexible Aggregation, and joint power and bandwidth allocation.
Under the Flexible Aggregation, we implement:
\begin{itemize}
    \item 
    \textbf{MC-NOMA with optimal joint power and bandwidth allocation (Joint Allocation):} 
    This is the case where \textbf{Algorithm \ref{Algorithm 2}} is run to maximize the WGPTM for each model aggregation. 
    
    \item
    \textbf{MC-NOMA with optimal power and equal bandwidth allocation (Power-only Allocation):}
    By setting each subchannel with equal bandwidth $\frac{B}{N}$, \textbf{Algorithm \ref{Algorithm 2}} is used to optimize the transmit powers of the users only. 
    
    \item 
    \textbf{MC-NOMA with full power and equal bandwidth (Full Power):} All users transmit with their full power $P_{\max}$ in subchannels with equal bandwidth $\frac{B}{N}$.

    \item 
    \textbf{MC-OMA/TDMA with optimal time and equal bandwidth allocation (MC-OMA):} 
    The users are assigned to subchannels of equal bandwidth. In a subchannel, the users transmit the maximum power $P_{\max}$ in non-overlapping time slots. The duration of the time slots can be evaluated. The transmission order of the users is enumerated to maximize the WGPTM.     
\end{itemize}

We also implement these algorithms under the extensively considered Sync-FL where all users run an equal number of iterations before a global aggregation~\cite{pmlr-v54-mcmahan17a,9264742}.
\begin{itemize}
    \item \textbf{Joint Allocation}: By updating Problem \textbf{P1} to be a max-min objective, the optimal power and bandwidth allocation is obtained under Sync-FL using a slightly modified version of the proposed \textbf{Algorithm \ref{Algorithm 2}}; see Appendix C. 
    
    \item \textbf{Power-only Allocation: } By setting the bandwidth of each subchannel to $\frac{B}{N}$, the optimal power allocation is readily evaluated; see Appendix~C. 
    
    \item \textbf{Full Power: }  
    The key difference between this scheme and its counterpart under Flexible Aggregation is that after the transmit powers and subchannel bandwidths are allocated, the smallest number of iterations among all users is identified. All users upload their local models trained on the smallest number of iterations.
    
    \item \textbf{MC-OMA: } The key difference between this scheme and its counterpart under Flexible Aggregation is that the transmission order of the users is enumerated to maximize the minimum of the LPTMs per subchannel. Then, the minimum LPTM of all subchannels specifies the number of iterations for every user per WFL round. 
\end{itemize}

Additionally, we consider MC-OMA/TDMA for Async-FL based on~\cite{TT-fed}, where each user completes 4 iterations before uploading its local models. Different from~\cite{fedasync,Fed-AT,TT-fed,9725259}, the channel and computing power of a user change over time in this paper. It is nontrivial to implement Async-FL under MC-OMA/TDMA.
We consider such an ideal setting that in each subchannel, the total of the model uploading times of all users and the training time of the computationally slowest user for 4 iterations is set to be an aggregation round of the subchannel. (The users transmit full powers and their model uploading times can be obtained, as under Sync-FL.) The longest round of all subchannels is a global aggregation round. Other users can train \textbf{$4X$} ($X \in \mathbb{N}$) iterations during a global round. We assume ideally each user has its local model aggregated superficially every 4 iterations with others' models that are also due to aggregate, with no additional model uploading delays, between two global aggregations involving the slowest user. 
As such, we can obtain better learning accuracy and converge faster than the practical Async-FL.

\subsection{Simulation Setup}

{\color{blue}We set the bandwidth of the system, i.e., $B$, to 30 MHz. The bandwidth is evenly divided into $N$ subchannels. By default, $N=10$.} The maximum transmit power of a user is $P_{\max}=46\ \text{dBm}$. The normalized channel gain of the users, i.e., $\frac{h}{\sqrt{N_0}} $, follow the uniform distribution between $2$ and $15$ dB, which changes independently between WFL rounds~\cite{7557079}. The processing speed of user $k$, i.e., $\beta_k$, obeys the uniform distribution between $6$ and $9$ GFLOPS, which can also change independently between WFL rounds due to random background traffic and operations.

 By default, we cluster all $K$ users for all $N$ subchannels by sorting the users in the ascending order of their channel gains and labeling them from 1 to $N$. The users with the same label are assigned to a subchannel, as done in~\cite{7557079}. Alternatively, we cluster the users at random by permuting them randomly and labeling them from 1 to $N$. The users with the same label are assigned to a subchannel. Note that NOMA user clustering is beyond the scope of this paper, and the proposed \textbf{Algorithm~\ref{Algorithm 2}} applies to any clustering method.

The simulations are performed based on the open-source FL model, namely, FedML~\cite{chaoyanghe2020fedml}.  
Two image classification datasets, i.e., F-MNIST and F-CIFAR100, are considered. 
\begin{itemize}
    \item The F-MNIST dataset~\cite{AdaptiveFL} extends the standard MNIST dataset of handwritten digits by adding upper- and lower-case English characters. The dataset contains 62 labels with 671,585 training examples and 77,483 test examples. 
    Each example is a $28\times28$-pixel square grayscale image. The F-MNIST dataset is partitioned into 3,400 i.i.d. training and test sets.

    \hspace{3 mm} We also extend the F-MNIST dataset to construct two non-i.i.d. datasets, referred to as NON-IID-2 and NON-IID-4.
    In the NON-IID-2 dataset, the 62 labels are divided into two groups with 31 labels each. Each user randomly chooses a group and 31 labels uniformly randomly from the group as its training and test examples. In the NON-IID-4 dataset, the 62 labels are divided into four groups. Each client chooses a group and 15 or 16 labels uniformly randomly from the group as its training and test examples.
    
    \hspace{3 mm} We consider a CNN model to train the F-MNIST dataset. The CNN model consists of two convolutional (Conv) layers using 32 and 64 convolutional filters with the size of $3 \times 3$ each. We insert a pooling layer using max-pooling with $2\times 2$ receptive fields between two consecutive Conv layers to prevent overfitting. We also insert two fully-connected (FC) layers to integrate the local information of the previous layer, i.e., the Conv or max-pooling layer. The last FC layer holds the output. Each Conv layer and FC layer are followed by a ReLU activation function, called a ReLU layer.

    \hspace{3 mm}When training the CNN on the F-MNIST dataset, the size of the model parameter is $S=4.84$~Mbytes. The number of FLOPs to train a sample is $\alpha=0.04$ GFLOPs. Here, $S$ and $\alpha$ are obtained by calling the function \textit{get\_model\_complexity\_info()} in the Python project \textit{ptflops}. The local dataset size $|\mathcal{D}_k|$ obeys the uniform distribution between 300 and 500. The duration of a round is $T=10$ seconds. The mini-batch size is set to $20$.  
    
    \item The F-CIFAR100 dataset assigns the CIFAR100 dataset to 500 training and 100 test sets. The CIFAR100 dataset contains 100 classes with 600 images (including 500 training examples and 100 test examples)  per class. Each image is a $32\times 32$-pixel color image.

\hspace{3 mm} We consider a ResNet18 model to train the F-CIFAR100 dataset. The ResNet18 has 18 parameter layers, including 17 Conv layers and an FC layer. Compared to a plain CNN network, shortcut connections are inserted to create a corresponding residual network~\cite{He_2016_CVPR}.

\hspace{3 mm} When training the ResNet18 on the F-CIFAR100 dataset, the size of the model parameter is $S=46.76$~Mbytes and the number of FLOPs to train a sample is $\alpha=0.08$ GFLOPs,  obtained by calling the function \textit{get\_model\_complexity\_info()} in the Python project \textit{ptflops}. The local dataset size $|\mathcal{D}_k|$ is $100$. A round lasts $T=30$ seconds. The mini-batch size is~10.

\end{itemize}
All simulations are conducted on a server with Intel(R) Xeon(R) CPU E5-2628 v3@2.50GHz and 126G memory and a GeForce GTX 1080 Ti GPU with 11178MiB memory, running Python 3.7.11, Numpy 1.21.2, and PyTorch `1.10.1' installed on an Ubuntu 18.04.5 LTS system.

\subsection{Simulation Results}

Fig. \ref{fig:dogl} evaluates the scalability of the proposed \textbf{Algorithm \ref{Algorithm 2}} by plotting the WGPTM with the increase of users, subchannels, and round duration, where the default NOMA user clustering method developed in~\cite{7557079} is adopted. For comparison, it also shows the alternative algorithms described in Section~\ref{setups}.
Each result is the average of 1,000 independent simulations.

\begin{figure}
	\centering
	\subfigure[$N=10$ and $T=10$ s.]
	{
		\begin{minipage}[t]{0.3\textwidth}
			\centering
			\includegraphics[width=5.3cm]{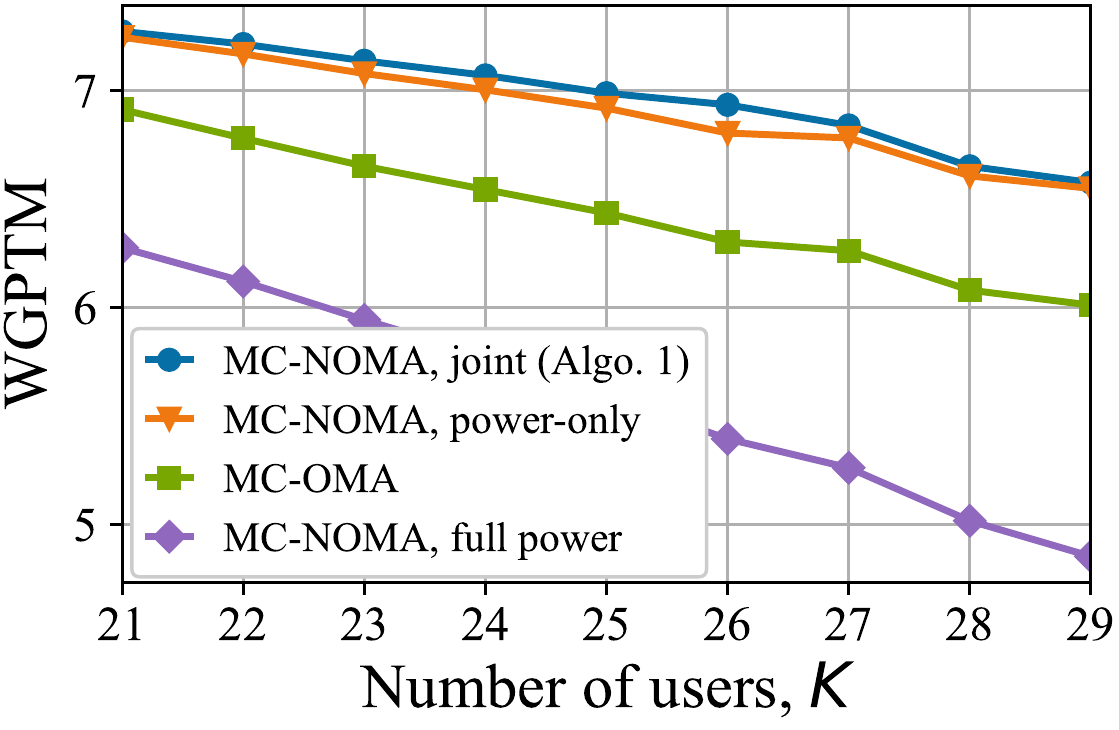}
		\end{minipage}
	}
	\subfigure[$K=25$ and $T=10$ s. ]
	{
		\begin{minipage}[t]{0.3\textwidth}
			\centering
			\includegraphics[width=5.3cm]{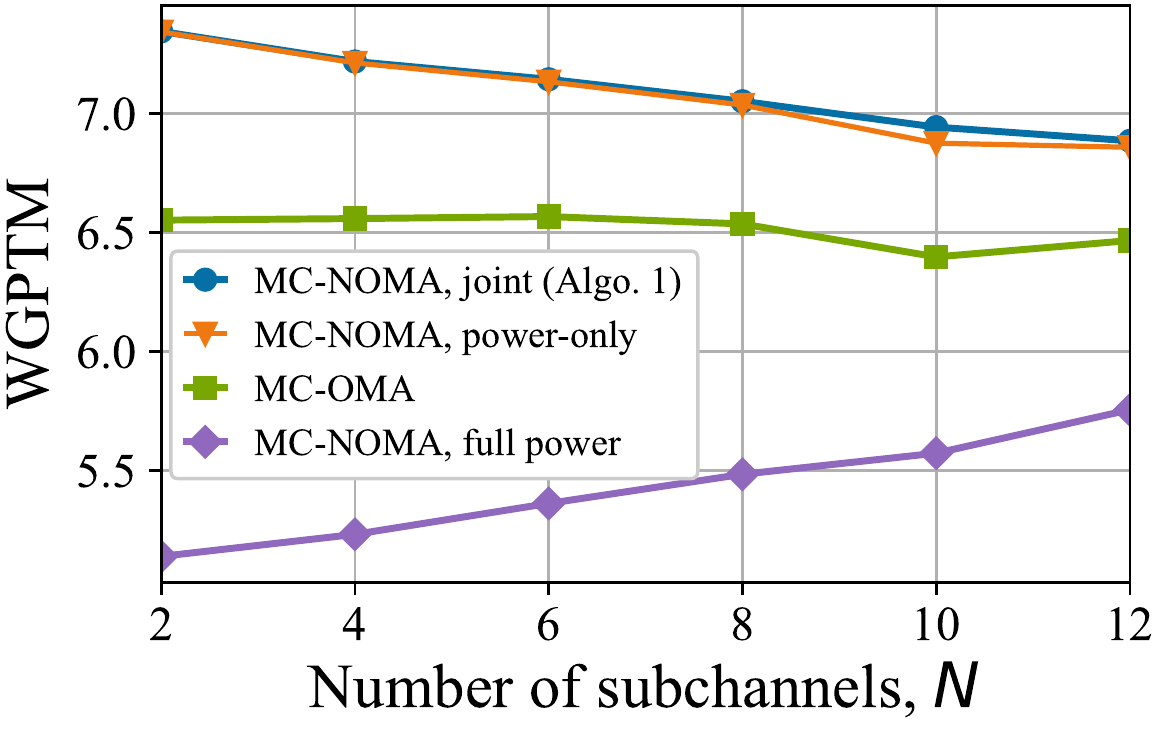}
		\end{minipage}
	}
	\subfigure[$K=25$ and $N=10$.]
	{
		\begin{minipage}[t]{0.3\textwidth}
			\centering
			\includegraphics[width=5.3cm]{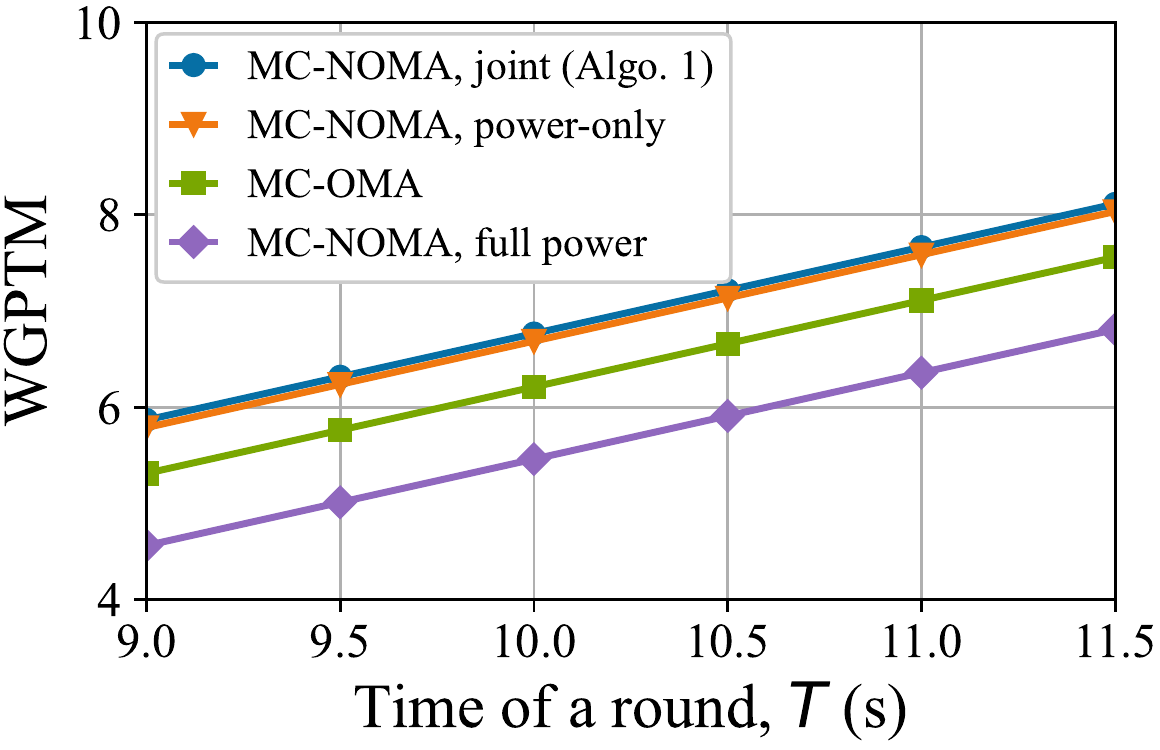}
		\end{minipage}
	}\caption{The WGPTM of the MC-NOMA WFL versus (a) the number of users, (b) the number of subchannels, and (c) the duration of a global round, where the NOMA user clustering method developed in~\cite{7557079} is adopted. The CNN model and F-MNIST dataset are considered. 
	} 
	\label{fig:dogl}
\end{figure}

Fig. \ref{fig:dogl}(a) shows 
that the WGPTM declines under all the considered algorithms with the increase of users, where the system bandwidth is $B=30$ MHz and the number of subchannels is $N=10$. The reason is that the growing number of users increasingly intensifies the co-channel interference in each subchannel, 
hence extending their transmit time, shortening their training time, and reducing the number of mini-batches trained in a WFL round. 
We also see that optimal joint power and bandwidth allocation under Flexible Aggregation (i.e., \textbf{Algorithm \ref{Algorithm 2}})
outperforms all other algorithms. 
The gain of NOMA over OMA is evident.

Fig.~\ref{fig:dogl}(b) plots the WGPTM with an increasing number of subchannels, where the total system bandwidth remains $B=30$ MHz. 
We see that the proposed \textbf{Algorithm \ref{Algorithm 2}} demonstrates significant improvement over its alternatives, especially when the number of subchannels is small. The reason is that more users have to share a subchannel when the number of subchannels is smaller, in which case meticulous allocation of the transmit powers and subchannel bandwidths plays a critical role. 
We also see that the WGPTMs of the considered algorithms, except one, 
decrease with the growing number of subchannels. The reason is that it takes longer for each user to upload its local model when the bandwidth of a subchannel is narrower, leaving a shorter computing period within a WFL round. The only exception is the full power transmission under MC-NOMA. Its WGPTM grows with subchannels since the co-channel interference decreases  in a subchannel, accelerating the transmissions and extending the local training time. 

By comparing \textbf{Algorithm \ref{Algorithm 2}} to its reduced version of optimal power allocation in both Figs.~\ref{fig:dogl}(a) and~\ref{fig:dogl}(b), 
{\color{blue}the benefit of the joint power and bandwidth allocation is demonstrated when a balanced (or reasonably even) distribution of the users among the subchannels is difficult to achieve, e.g., when $N=10$ and $K=25$ in Fig.~\ref{fig:dogl}(b). Specifically, when $N=10$ and $K=25$, half of the subchannels accommodate two users and the other half accommodate three users. The subchannels accommodating three users suffer from stronger interference. This is a particularly unbalanced situation with the largest number of subchannels undergoing stronger interference than the rest, compared to the other $N$ values. Without adjusting the bandwidth of the subchannels, the model uploading times are lengthened, the model training times are shortened, and the data utilized in the training is reduced in the subchannels with three users under the power-only strategy. This leads to more noticeable degradation of the WGPTM under the power-only strategy with respect to the joint strategy, compared to the other $N$ values. }

Fig.~\ref{fig:dogl}(c) shows that the WGPTM grows linearly with the duration of a WFL round under all considered algorithms, since the increasing round duration extends the local training times of the users. The figure also shows that the proposed joint power and bandwidth allocation of MC-NOMA under Flexible Aggregation, i.e., \textbf{Algorithm 1}, performs consistently the best. {\color{blue}
The reason is that the objective function is 
$\mathop {\max }\limits_{\mathbf{p}^t_{n},B^t_{n},\,\forall n} \; \Phi^t =\sum\limits _{k\in\mathcal{K}}e_{k}\Phi_{k}^t$ in Problem \textbf{P1} and 
$  \mathop {\max }\limits_{\mathbf{p}_{n},B_{n},\forall n} \, \mathop {\min }\limits_{k\in \mathcal{K}}\ \Phi_k$ in Problem {\bf{P6}}. Since $\sum\limits _{k\in\mathcal{K}}e_{k}\Phi_{k}^t \geq \underset{k\in \mathcal{K}}{\min}\, \Phi_k $ always holds, it readily follows that the maximum value of the objective function of {\bf{P1}} is no smaller than that of {\bf{P6}}. This justifies the superiority of the proposed MC-NOMA-based algorithm to its MC-OMA-based counterpart.}

\begin{figure}
	\centering
	\subfigure[$N=10$ and $T=10$ s.]
	{
		\begin{minipage}[t]{0.31\textwidth}
			\centering
			\includegraphics[width=5.3cm]{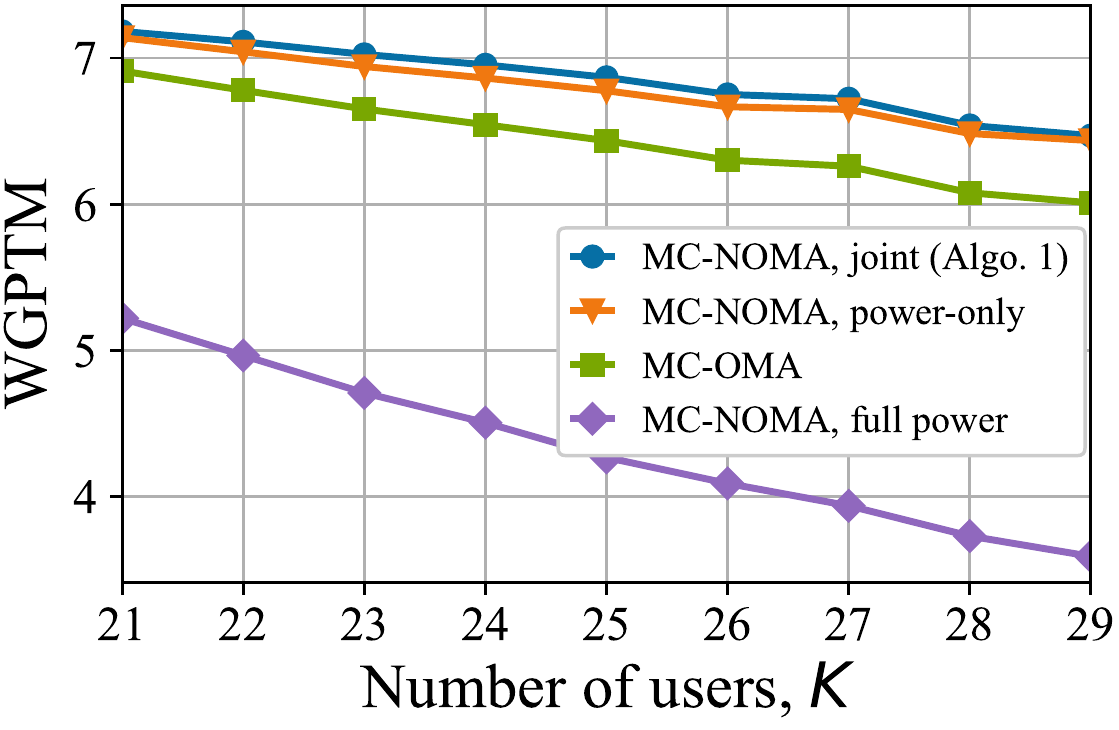}
		\end{minipage}}
	\subfigure[$K=25$ and $T=10$ s.]
	{
		\begin{minipage}[t]{0.31\textwidth}
			\centering
			\includegraphics[width=5.3cm]{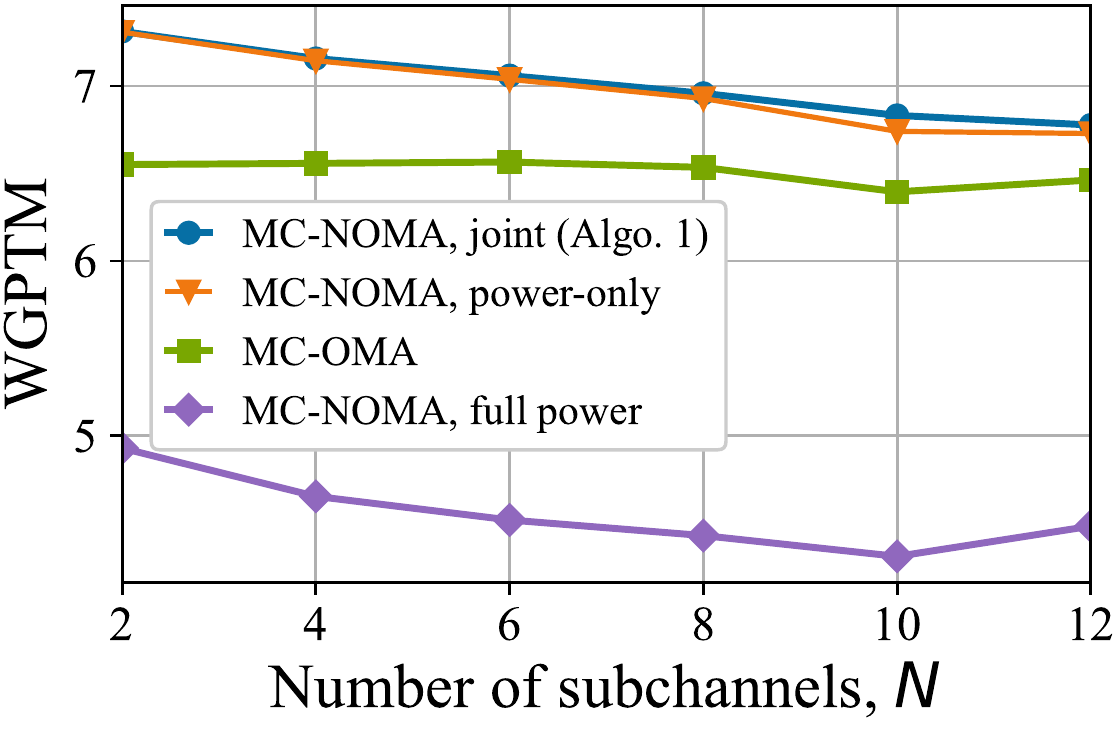}
		\end{minipage}
	}
	\subfigure[$K=25$ and $N=10$.]
	{
		\begin{minipage}[t]{0.31\textwidth}
			\centering
			\includegraphics[width=5.4cm]{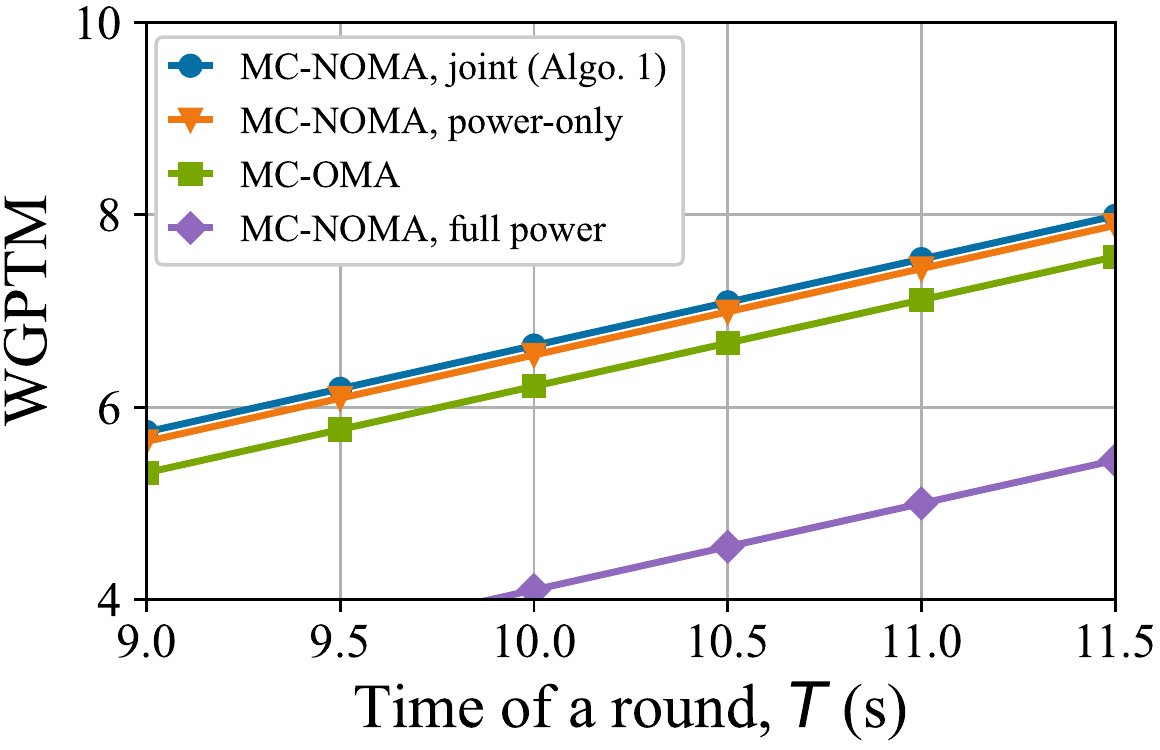}
		\end{minipage}
	}\caption{The WGPTM of the MC-NOMA WFL versus (a) the number of users and (b) the number of subchannels, and (c) the duration of a global round, where random NOMA user clustering is adopted. The CNN model and F-MNIST dataset are considered.
	}\label{fig:random cluster}	
\end{figure}

Fig.~\ref{fig:random cluster} evaluates the scalability of the considered algorithms when the users are clustered at random. The rest of the simulation settings are consistent with Fig.~\ref{fig:dogl}.
While the WGPTMs of the considered algorithms generally decline because of the less effective, random NOMA user clustering, observations  made in Fig.~\ref{fig:random cluster} are consistent with those in Fig.~\ref{fig:dogl}. 
The only exception is the full power transmission under MC-NOMA. As discussed earlier, the users are clustered with a balanced consideration of the number of users per subchannel and the distribution of their channel gains in Fig.~\ref{fig:dogl}. With the increase of subchannels, the impact of the reduced co-channel interference increasingly outgrows that of the reduced bandwidth of each subchannel, leading to the growth of the WGPTM. However, no consideration is given to balancing the distribution of the users' channel gains among different subchannels when the users are clustered randomly. The impact of the reduced bandwidth per subchannel is strong. 

\begin{figure}
	\centering
	\subfigure
	{
		\begin{minipage}[t]{0.46\textwidth}
			\centering
			\includegraphics[width=8cm]{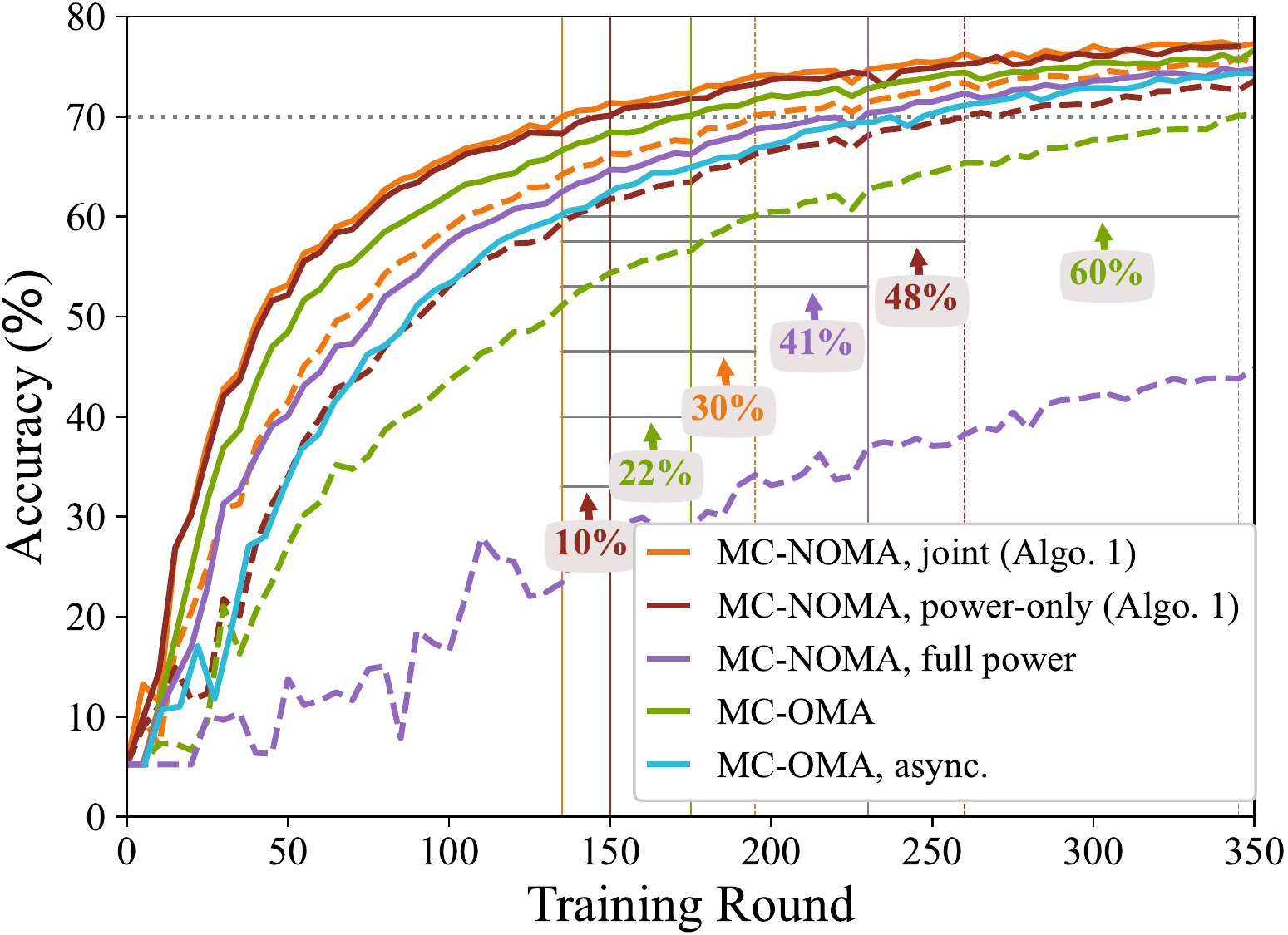}
		\end{minipage}
	}
	\subfigure
	{
		\begin{minipage}[t]{0.46\textwidth}
			\centering
			\includegraphics[width=8cm]{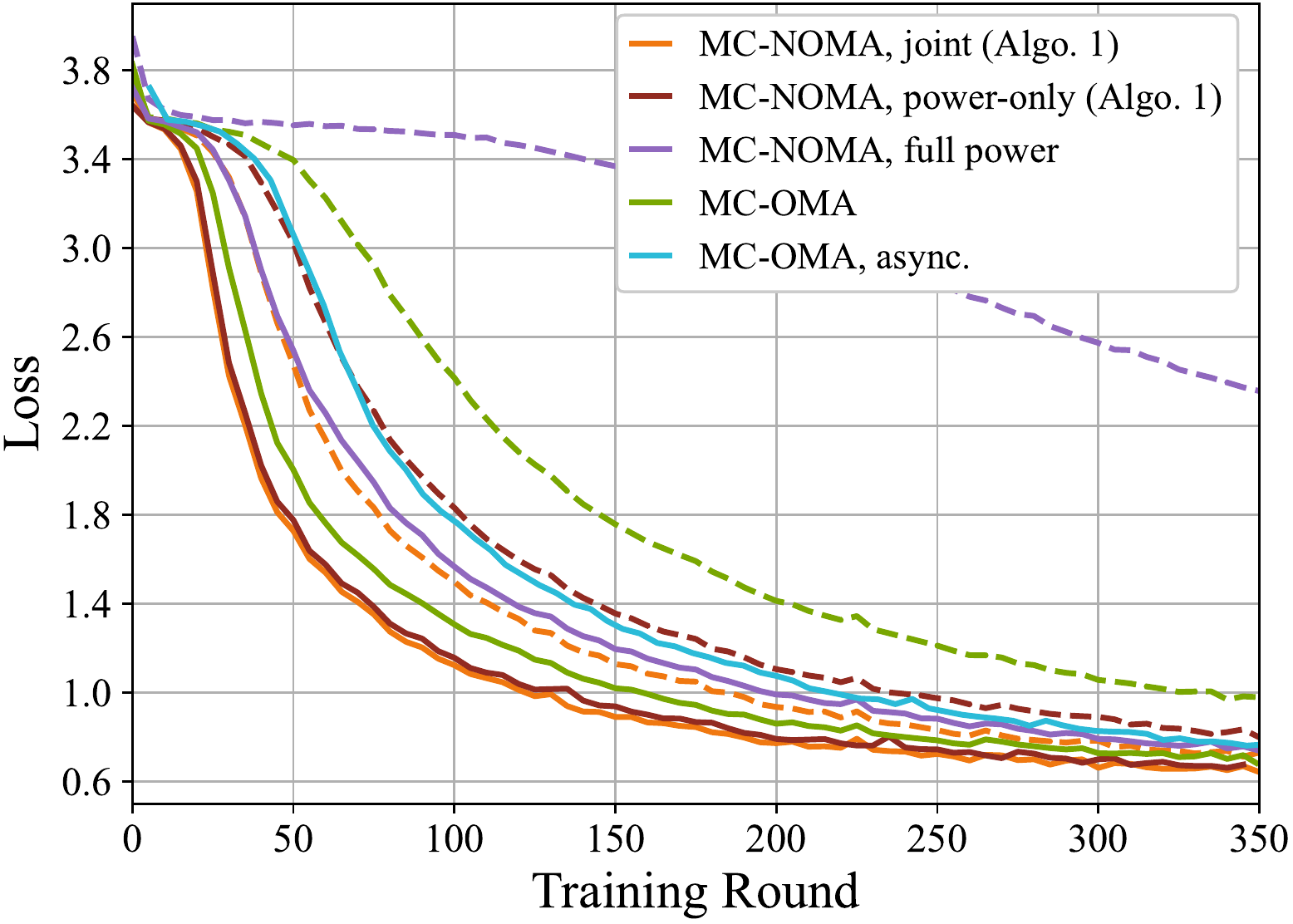}
		\end{minipage}
	}\caption{Training accuracy and loss of MC-NOMA WFL under Flexible Aggregation and Sync-FL, where $K=25$, $T=10\ {\text{s}}$, $N=10$, $\eta=0.03$. CNN and F-MNIST are considered. MC-OMA is plotted under Flexible Aggregation, Sync-FL, and Async-FL. The solid and dotted lines correspond to Flexible Aggregation and Sync-FL, respectively. } 
	\label{fig:acc-loss-FLTA-Supp}
\end{figure}
Fig. \ref{fig:acc-loss-FLTA-Supp} shows the training accuracy and loss of the CNN model trained on the F-MNIST dataset under the Flexible Aggregation, Sync-FL, and Async-FL.
The number of participating users is $25$ in every WFL round. 
We see that the training accuracy increases and the training loss decrease under all the considered power and bandwidth allocation schemes, as the number of rounds increases. 
We also see that the proposed joint optimization of MC-NOMA under Flexible Aggregation i.e., \textbf{Algorithm \ref{Algorithm 2}}, converges the fastest to the highest training accuracy and lowest training loss, demonstrating the effectiveness of the algorithm. 
When the accuracy requirement is 0.7, \textbf{Algorithm \ref{Algorithm 2}} only needs 135 training rounds to reach the requirement, 10\% -- 41\% faster than the other algorithms under Flexible Aggregation, and more than 30\%  faster than the algorithms under Sync-FL. Particularly, \textbf{Algorithm \ref{Algorithm 2}} is 60\% and 41\% faster than the MC-OMA under Sync-FL and Async-FL, respectively. The MC-OMA schemes under Sync-FL~\cite{9264742} and Async-FL~\cite{TT-fed} are the known existing alternatives to \textbf{Algorithm~\ref{Algorithm 2}}.

As shown in Fig.~\ref{fig:acc-loss-FLTA-Supp}, the power allocation plays a more important role than the bandwidth allocation in \textbf{Algorithm \ref{Algorithm 2}}, since the gaps of both the accuracy and loss are marginal between the joint power and bandwidth allocation and the power only allocation. 
Moreover, MC-OMA/TDMA is considerably better than MC-NOMA with full power transmission. This is due to the excessive co-channel interference arising from the use of NOMA when adequate power allocation is absent. 
As also shown in Fig.~\ref{fig:acc-loss-FLTA-Supp}, the Flexible Aggregation
consistently outperforms Sync-FL.
Specifically, the Flexible Aggregation can make a significant difference in the case of the MC-NOMA with full power transmissions. This is because strong co-channel interference results in excessively long transmit delay and spares little time for local model training, as observed in Figs.~\ref{fig:dogl} and~\ref{fig:random cluster}. 

\begin{figure}
	\centering
	\subfigure[Non-i.i.d. dataset: Non-IID-2.]
	{
		\begin{minipage}[t]{0.45\textwidth}
			\centering
			\includegraphics[width=7.8cm]{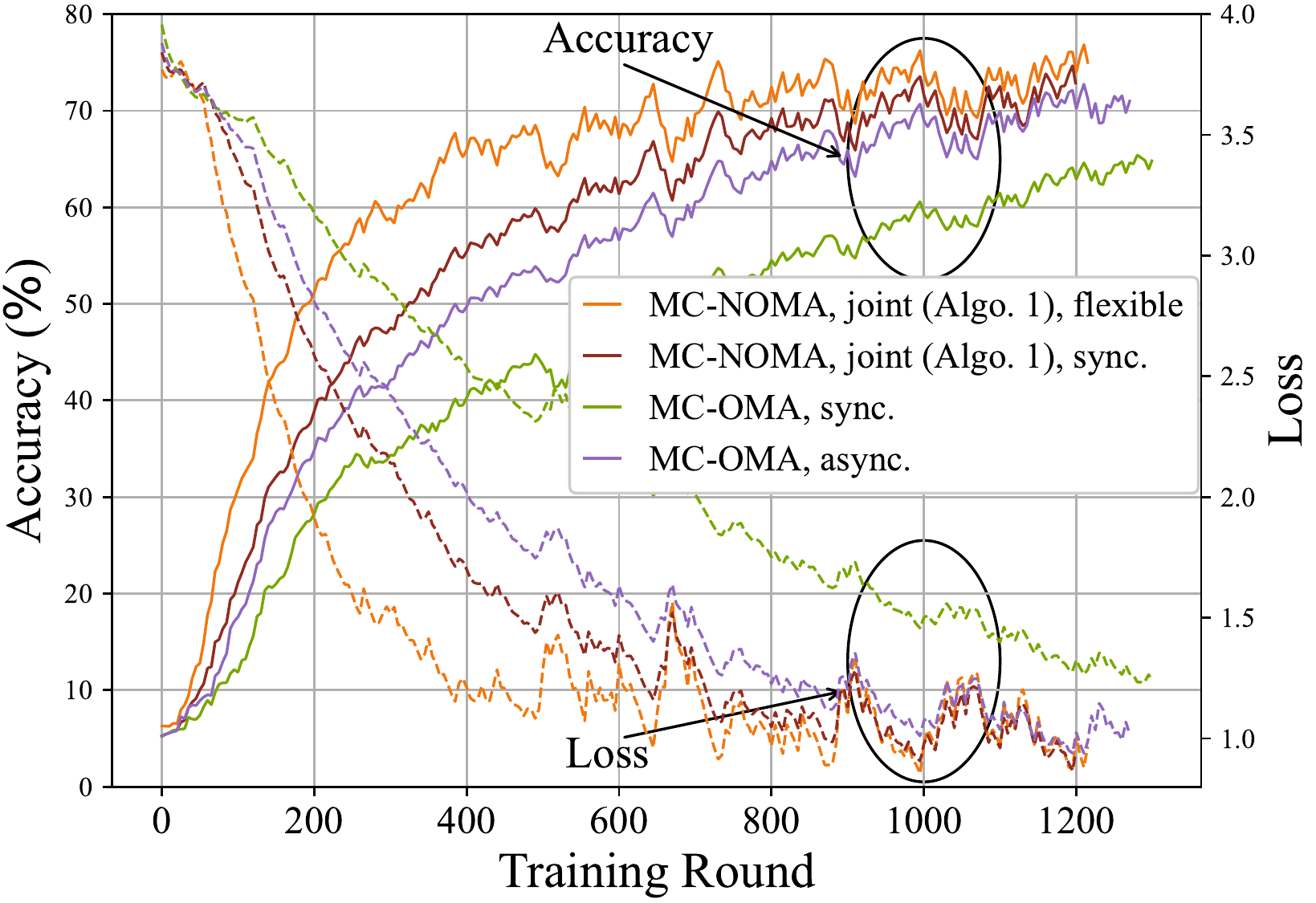}
		\end{minipage}
	}
	\subfigure[Non-i.i.d. datasets: Non-IID-4.]
	{
		\begin{minipage}[t]{0.45\textwidth}
			\centering
			\includegraphics[width=7.8cm]{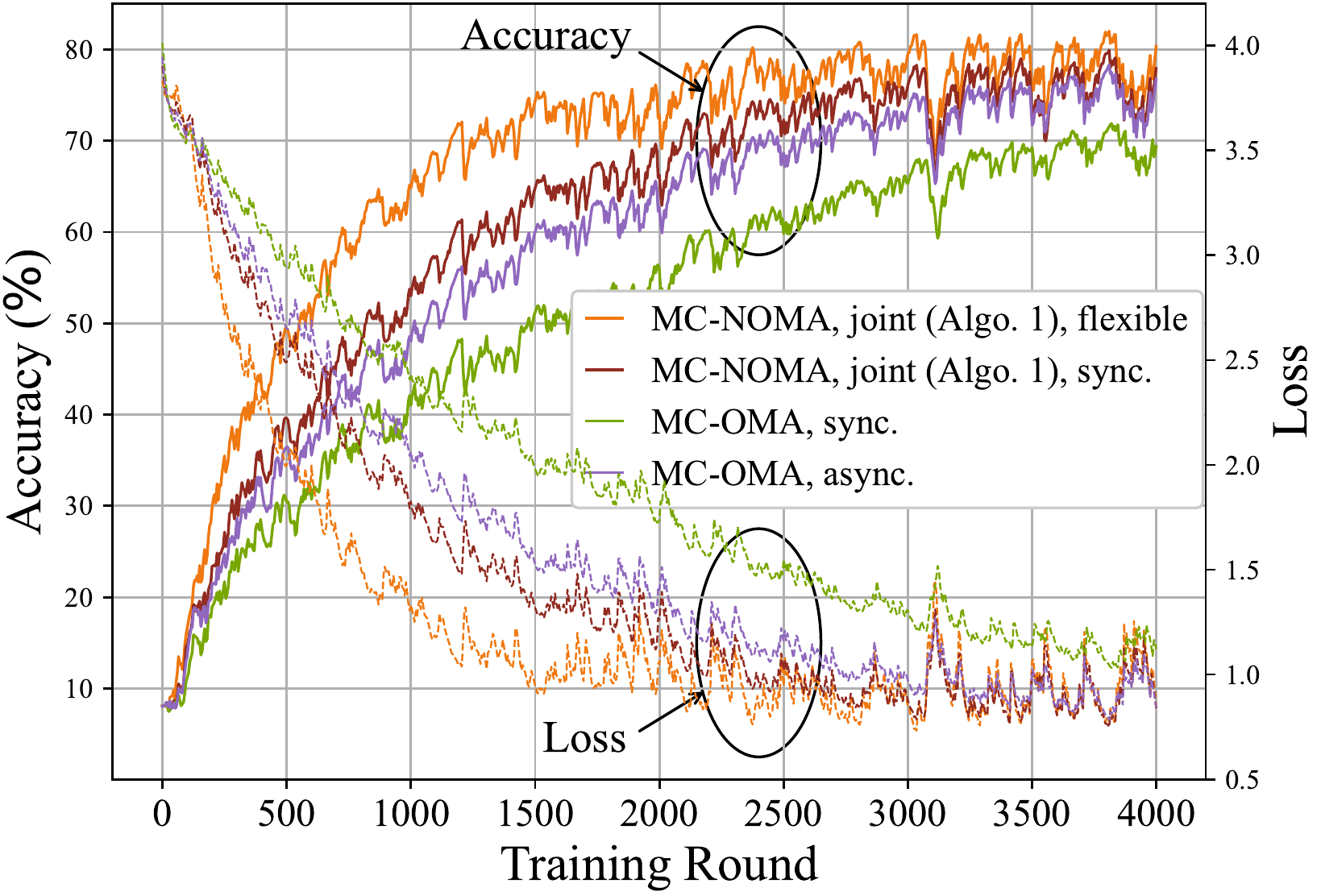}
		\end{minipage}
	}\caption{Training accuracy and loss of MC-NOMA WFL under Flexible Aggregation and Sync-FL, and MC-OMA WFL under Sync-FL and Async-FL. $K=25$, $T=10\ {\text{s}}$, $N=10$, $\eta=0.03$. CNN and F-MNIST are considered. 
	}
	\label{fig:NONIID}
\end{figure}

\begin{figure}
	\centering
	\subfigure[$N=10$ and {\color{blue}$T=30$ s}.]
	{
		\begin{minipage}[t]{0.3\textwidth}
			\centering
			\includegraphics[width=5.3cm]{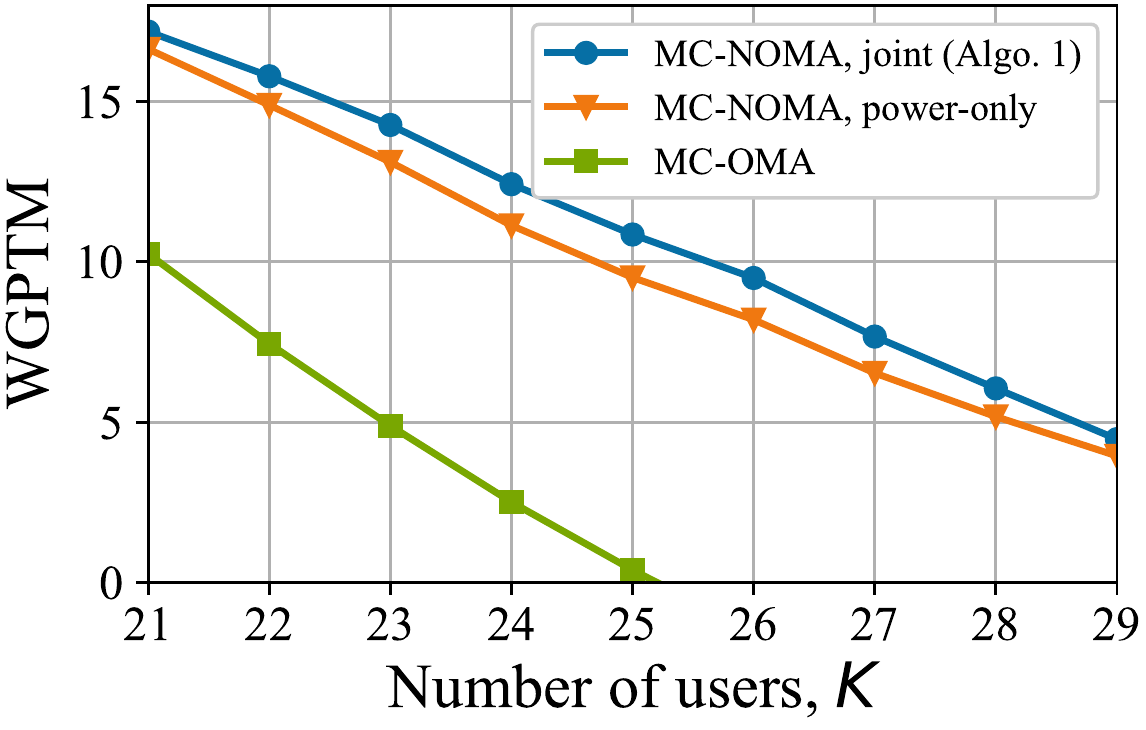}
		\end{minipage}}
	\subfigure[$K=25$ and {\color{blue}$T=30$ s}.]
	{
		\begin{minipage}[t]{0.3\textwidth}
			\centering
			\includegraphics[width=5.3cm]{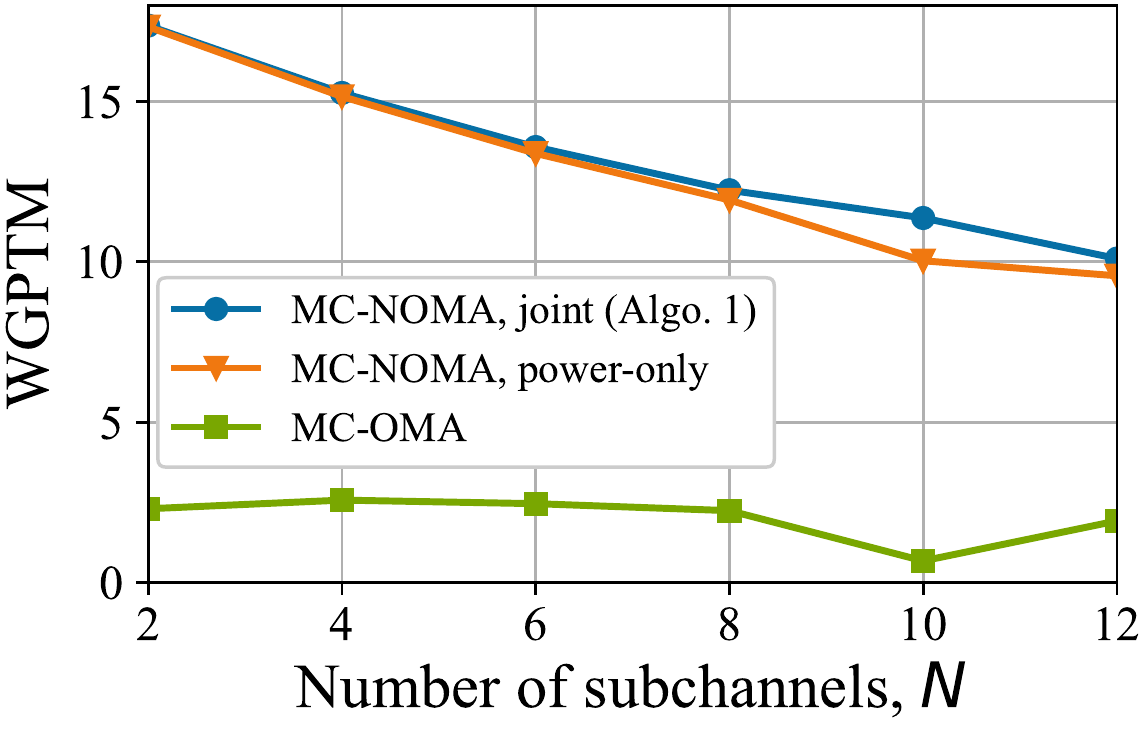}
		\end{minipage}}
	\subfigure[ $K=25$ and $N=10$.]
	{
		\begin{minipage}[t]{0.3\textwidth}
			\centering
			\includegraphics[width=5.3cm]{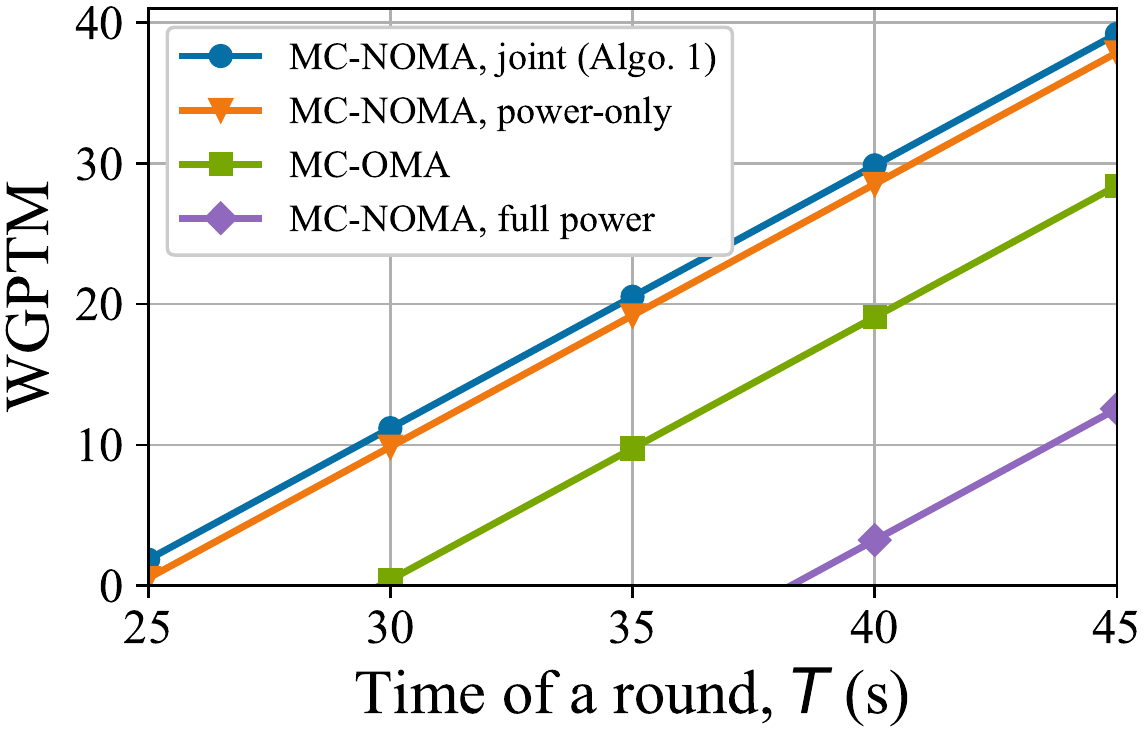}
		\end{minipage}
	}\caption{The WGPTM of MC-NOMA WFL versus (a) the number of users, (b) the number of subchannels, and (c) round duration, where the user clustering method is developed in~\cite{7557079}. ResNet18 and F-CIFAR100 are considered.}
	\label{fig:cifar-wgptm}
\end{figure}	
{\color{blue}Fig.~\ref{fig:NONIID} assesses the impact of the non-i.i.d. distribution of the F-MNIST data on the CNN model trained using the proposed \textbf{Algorithm \ref{Algorithm 2}}, where MC-NOMA under both the Flexible Aggregation and Sync-FL are considered. 
For comparison, MC-OMA under Sync-FL and Async-FL are also plotted.
Non-IID-2 and Non-IID-4 are considered in Figs.~\ref{fig:NONIID}(a) and~\ref{fig:NONIID}(b), respectively. 
By comparing Figs.~\ref{fig:acc-loss-FLTA-Supp},~\ref{fig:NONIID}(a), and~\ref{fig:NONIID}(b), we see that the convergence of the algorithm decreases with the increasing difference of the local training datasets.}
We also see that MC-NOMA under Flexible Aggregation can still substantially outperform MC-OMA under both Sync-FL and Async-FL, and it takes significantly longer for the algorithms to train on models non-i.i.d. datasets.

\begin{figure}
	\centering
	\subfigure
	{
		\begin{minipage}[t]{0.45\textwidth}
			\centering
			\includegraphics[width=8cm]{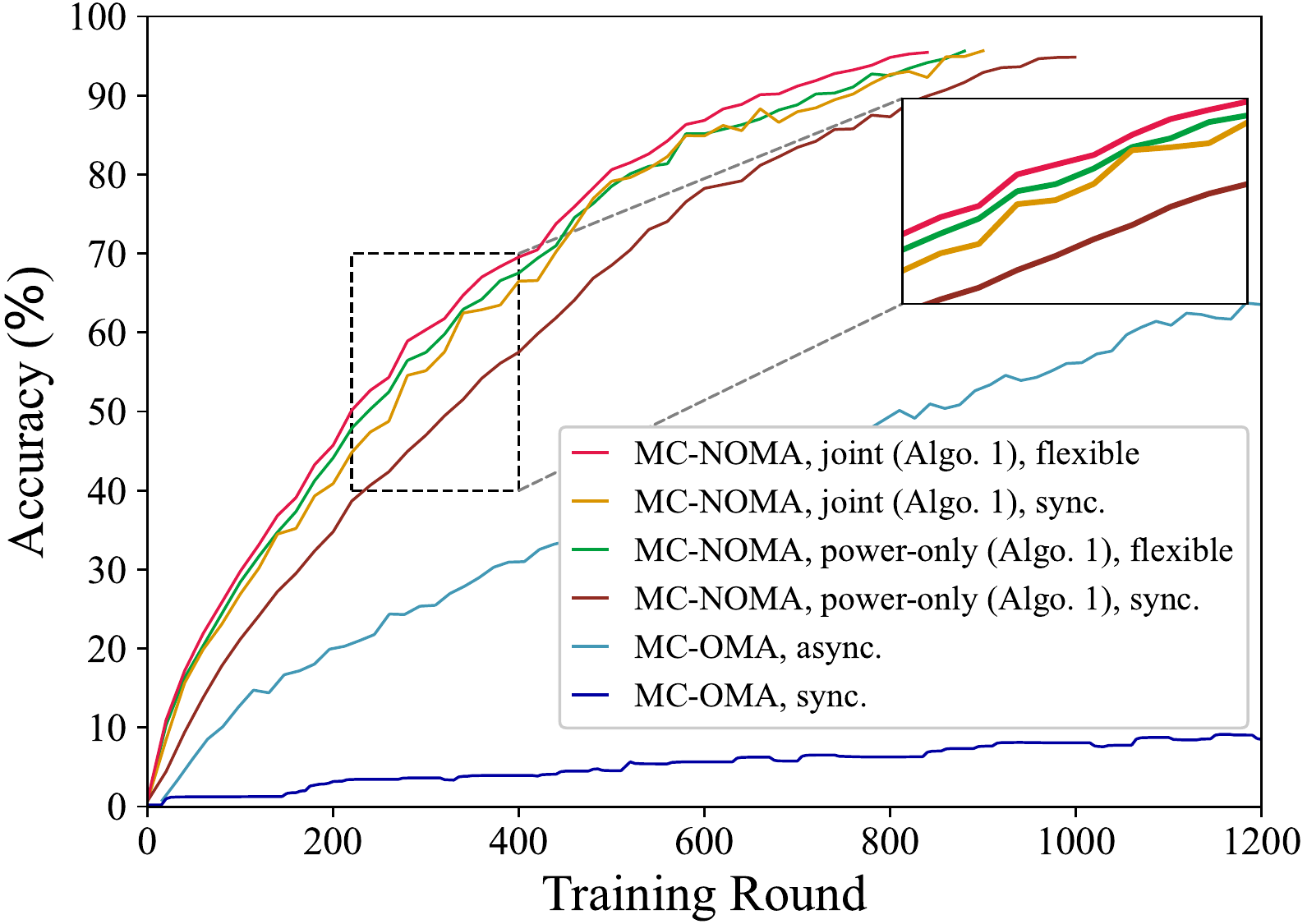}
		\end{minipage}
	}
	\subfigure
	{
		\begin{minipage}[t]{0.45\textwidth}
			\centering
			\includegraphics[width=7.8cm]{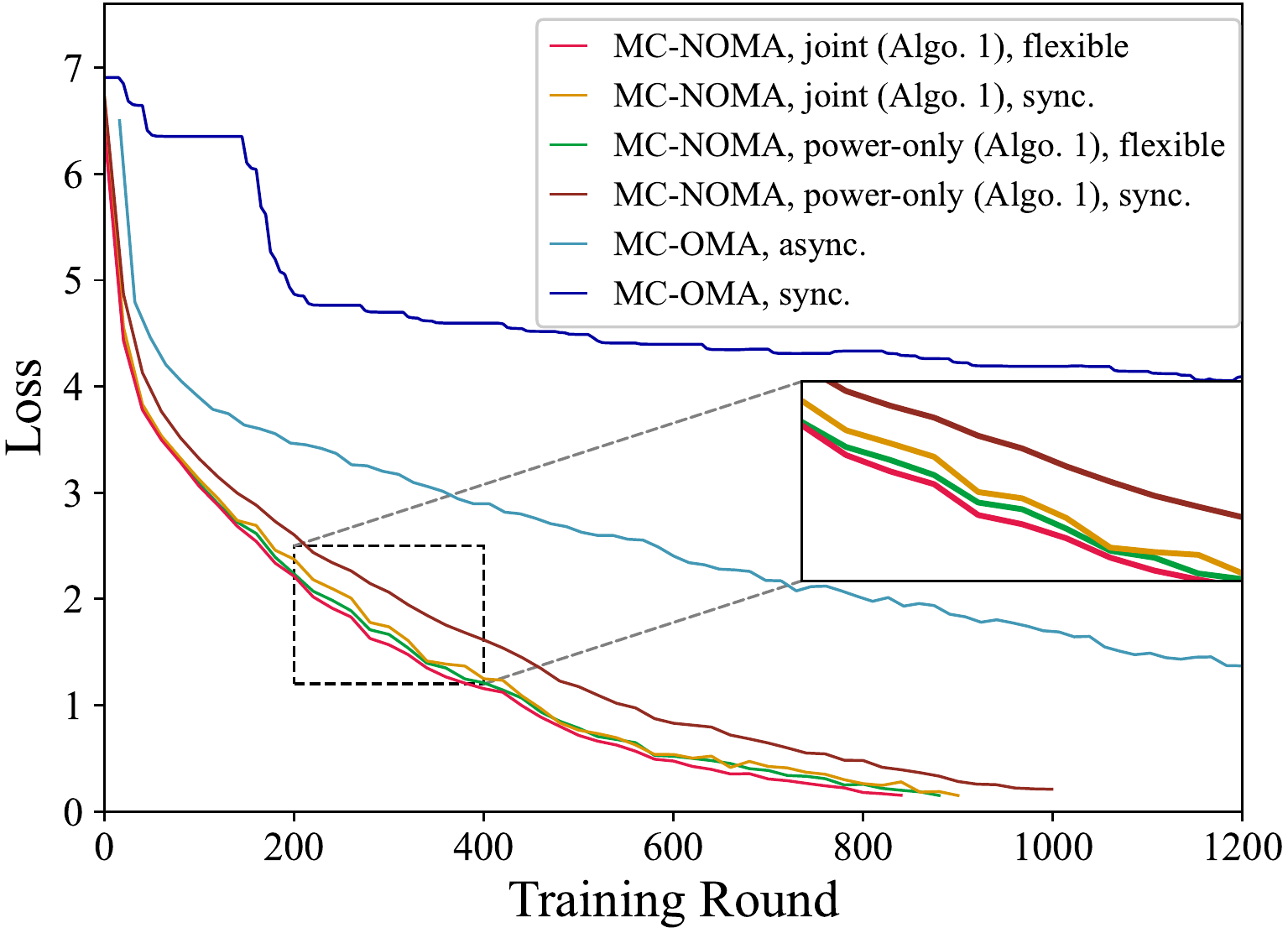}
		\end{minipage}
	}\caption{Training accuracy and loss of MC-NOMA under Flexible Aggregation and Sync-FL, and MC-OMA under Sync-FL and Async-FL.
	 $K=25$, $T=30\ {\text{s}}$, $N=10$, and $\eta=1$. ResNet18 and F-CIFAR100 are considered.  
	}
	\label{fig:Fecifar100}
\end{figure}
Next, we proceed to assess the applicability of the proposed \textbf{Algorithm \ref{Algorithm 2}} to the ResNet18 model trained on the F-CIFAR100 dataset. 
Fig.~\ref{fig:cifar-wgptm} shows the changes in the WGPTMs of the considered algorithms as users, subchannels, and the round duration increase, where the ResNet18 model is trained on the F-CIFAR100 dataset. Observations made in this figure are consistent with those in Fig.~\ref{fig:dogl}. Specifically, the proposed optimal  joint allocation of MC-NOMA under Flexible Aggregation, i.e., \textbf{Algorithm \ref{Algorithm 2}}, outperforms all the other considered algorithms in terms of WGPTM. 
Note that at $T=10$ s, the MC-NOMA with full power transmission is infeasible with a negative WGPTM, as shown in Fig.~\ref{fig:cifar-wgptm}(c), and therefore is not plotted in Figs.~\ref{fig:cifar-wgptm}(a) and~\ref{fig:cifar-wgptm}(b).

Last but not least, Fig. \ref{fig:Fecifar100} demonstrates the accuracy and loss of the ResNet18 model, as the number of training rounds increases.
We plot the MC-NOMA under both Flexible Aggregation and Sync-FL. 
For comparison, we also plot the MC-OMA under Sync-FL and Async-FL, which are the possible existing alternatives to the MC-NOMA under Flexible Aggregation.
We see that the proposed \textbf{Algorithm \ref{Algorithm 2}} leads to the faster and better convergence of the ResNet18 model trained to classify the F-CIFAR100 dataset, compared to its alternative approaches.
The convergence of the ResNet18 model is much slower than that of the CNN model, due to the substantially bigger size and more complex structure of the ResNet18 model and the larger color images of the F-CIFAR100 dataset. 
As a consequence, the duration of $T=30$ seconds per WFL round may not be able to sustain the effective model uploading of some users when their channel conditions are poor in some WFL rounds. In the case of Sync-FL, no local training takes place in those rounds. Consequently, the training accuracy is poor and the convergence is slow. Fig.~\ref{fig:Fecifar100} validates the applicability of the proposed MC-NOMA under Flexible Aggregation and the optimal joint allocation to the ResNet18 model and F-CIFAR100 dataset.

	\section{Conclusion}\label{section:con}
In this paper, we proposed an MC-NOMA WFL system that allows users to train different numbers of iterations per WFL round, adapting to their channel conditions and computing powers. 
A new metric, WGPTM, was designed to measure the convergence of the new system based on a convergence analysis. 
A new, non-convex, joint power and bandwidth allocation problem was formulated to maximize the WGPTM and, in turn, harness the convergence of the new system.
Using variable substitution and Cauchy's inequality, we converted the problem losslessly to a convex problem solved with polynomial complexity.
Extensive simulations based on the F-MNIST and F-CIFAR100 datasets showed that the new MC-NOMA WFL can efficiently reduce communication delay, increase local model training times, and accelerate convergence.

	\appendix
	\subsection{Proof of Theorem 1}\label{proof-theorem}
	Under the assumptions \eqref{ass1} and \eqref{ass2}, we have
		\begin{align}
			F_{k}\left(\mathbf{w}_k^t\right)-F_{k}^{*}\leq\left(1-c_k\right)^{\phi_k}\left(F_{k}\left(\mathbf{w}_{\mathcal{G}}^{t-1}\right)-F_{k}^{*}\right),\label{lemma3}
		\end{align}
		where $c_k=2m\eta_k-mL\eta_k^{2}$
by combining 
	\cite[Eq.~9.9]{boyd2004convex} and \cite[Eq.~9.17]{boyd2004convex}.
	We also have 
{\small			\begin{subequations}
			\begin{align}
				\mathbb{E}\big(\underset{k\in\mathcal{K}}{\sum}e_{k}\left\Vert \mathbf{w}_{\mathcal{G}}^{t}-\mathbf{w}_{k}^{t}\right\Vert \big)
				&=\mathbb{E}\big(\underset{k\in\mathcal{K}}{\sum}e_{k}\left\Vert \mathbf{w}_{\mathcal{G}}^{t-1}-\mathbf{w}_{k}^{t}+\mathbf{w}_{\mathcal{G}}^{t}-\mathbf{w}_{\mathcal{G}}^{t-1}\right\Vert\big)
				\leq\mathbb{E}\big(\underset{k\in\mathcal{K}}{\sum}e_{k}\left\Vert \mathbf{w}_{\mathcal{G}}^{t-1}-\mathbf{w}_{k}^{t}\right\Vert \big)\label{41a}
				\\
				&=\mathbb{E}\big(\underset{k\in\mathcal{K}}{\sum}e_{k}\Big\Vert \stackrel[c=1]{\phi_{k}^t}{\sum}\left(\eta_k\nabla F_{k}\left(\mathbf{w}_{k,c}^{t-1}\right)\right)\Big\Vert \big)\label{41b}
				\\
				&\leq\mathbb{E}\big(\underset{k\in\mathcal{K}}{\sum}e_{k}\stackrel[c=1]{\phi_{k}^t}{\sum}\eta_k\Big\Vert \nabla F_{k}\left(\mathbf{w}_{k,c}^{t-1}\right)\Big\Vert \big) 
				\leq \underset{k\in\mathcal{K}}{\sum}e_{k}\stackrel[c=1]{\phi_{k}^t}{\sum}\eta_kG\label{41c}
				\\
				\notag
				&= \underset{k\in\mathcal{K}}{\sum}e_{k}\phi_{k}^t\eta_kG =\underset{k\in\mathcal{K}}{\sum}e_{k}\Phi_{k}^t\eta G, 
			\end{align}
		\end{subequations}}where \eqref{41a} is due to the fact that $\mathbb{E}\left\Vert X-\mathbb{E}X\right\Vert \leq\mathbb{E}\left\Vert X\right\Vert $ and we set $X=\mathbf{w}_{\mathcal{G}}^{t-1}-\mathbf{w}_{k}^{t}$ with probability $e_k$; \eqref{41b} plugs in the $c$-th local training of user $k$ based on its mini-batches, i.e., $\mathbf {w}_{k,c+1}^t=\mathbf {w}_{k,c}^t-\eta_k\nabla F_k(\mathbf {w}_{k,c}^t)$, $c=1,\cdots,\phi_{k}^t$; and \eqref{41c} is due to the assumption that $\mathbb{E}\{\left\Vert \nabla F_{k}\right\Vert ^{2}\}\leq G^{2}$ in \eqref{ass3}.
		As a result, the distance between $\mathbf{w}_{k}^{t}$ and $\mathbf{w}_{\mathcal{G}}^{t}$ is upper bounded by
		\begin{align}\label{eq: lemma 4} 
			\mathbb{E}\big(\underset{k\in\mathcal{K}}{\sum}e_{k}\left\Vert \mathbf{w}_{\mathcal{G}}^{t}-\mathbf{w}_{k}^{t}\right\Vert \big)\leq\underset{k\in\mathcal{K}}{\sum}e_{k} \Phi_{k}^t\eta G.
		\end{align}
		
        Further, after the $t$-th model aggregation, the divergence of the global loss function value $F(\mathbf{w}_{\mathcal{G}}^{t})$ and the local loss function values $\{F_{k}\left(\mathbf{w}_{k}^{t}\right),\forall k\}$ is given by
{\small		\begin{subequations}\label{ineq:44}
			\begin{align}
				\mathbb{E}\big(\underset{k\in\mathcal{K}}{\sum}e_{k}\left(F_{k}\left(\mathbf{w}_{\mathcal{G}}^{t}\right)-F_{k}\left(\mathbf{w}_{k}^{t}\right)\right)\big)
				&\leq\mathbb{E}\big(\underset{k\in\mathcal{K}}{\sum}e_{k}\left(\left\langle\nabla F_{k}\left(\mathbf{w}_{\mathcal{G}}^{t}\right),\mathbf{w}_{\mathcal{G}}^{t}-\mathbf{w}_{k}^{t}\right\rangle\right)\big)\label{42a}\\
				&\leq\mathbb{E}\big(\underset{k\in\mathcal{K}}{\sum}e_{k}\big(\left\Vert \nabla F_{k}\left(\mathbf{w}^{t}\right) \right\Vert \left\Vert \mathbf{w}_{\mathcal{G}}^{t}-\mathbf{w}_{k}^{t}\right\Vert\big)\big)\label{42b}\\
				&\leq\underset{k\in\mathcal{K}}{\sum}e_{k}\Phi_{k}^t\eta G^{2},\label{42c}
			\end{align}
		\end{subequations}}where \eqref{42a} is based on the equivalent form of convex assumption in \eqref{ass1}, $f(\mathbf{y})-f(\mathbf{x})\leq\left\langle \nabla f(\mathbf{y}),\mathbf{y}-\mathbf{x}\right\rangle $;
		\eqref{42b} is based on the property of vector inner product $\left\langle {\mathbf{a},\mathbf{b}}\right\rangle \leq\left\Vert \mathbf{a}\right\Vert \cdot \left\Vert \mathbf{b}\right\Vert$, and \eqref{42c} is based on \eqref{eq: lemma 4} and $\mathbb{E}\{\left\Vert \nabla F_{k}\right\Vert ^{2}\}\leq G^{2}$ in \eqref{ass3}.

		Now, we can expand $\mathbb{E}\big(\underset{k\in\mathcal{K}}{\sum}e_{k}\left(F_{k}\left(\mathbf{w}_{\mathcal{G}}^{t}\right)-F_{k}^{*}\right)\big)$ as
		{\small \begin{align}
			\mathbb{E}\Big(\underset{k\in\mathcal{K}}{\sum}e_{k}&\left(F_{k}\left(\mathbf{w}_{\mathcal{G}}^{t}\right)-F_{k}^{*}\right)\Big) =\mathbb{E}\Big(\underset{k\in\mathcal{K}}{\sum}e_{k}\big(F_{k}\left(\mathbf{w}_{\mathcal{G}}^{t}\right)-F_{k}\left(\mathbf{w}_{k}^{t}\right)\big)\Big)+
			\mathbb{E}(\underset{k\in\mathcal{K}}{\sum}e_{k}\left(F\left(\mathbf{w}_{k}^{t}\right)-F_{k}^{*}\right)\Big)\label{eq:45}.
		\end{align}}
		Based on (\ref{lemma3}), we have
		\begin{align}
			\underset{k\in\mathcal{K}}{\sum}e_{k}\left(F\left(\mathbf{w}_k^t\right)-F_{k}^{*}\right)\leq\underset{k\in\mathcal{K}}{\sum}e_{k}\left(1-c_k\right)^{\Phi_{k}^t}\left(F_{k}\left(\mathbf{w}_{\mathcal{G}}^{t-1}\right)-F_{k}^{*}\right).\label{lemma3-2}
		\end{align}
		By combining \eqref{ineq:44}, \eqref{eq:45}, and (\ref{lemma3-2}), we have
{\small	\begin{align}
			 \mathbb{E}	\Big(\underset{k\in\mathcal{K}}{\sum}e_{k}\big(F_{k}&\left(\mathbf{w}_{\mathcal{G}}^{t}\right)-F_{k}^{*}\big)\Big)\leq\mathbb{E}\Big(\underset{k\in\mathcal{K}}{\sum}e_{k}\left(1-c_k\right)^{\Phi_{k}^t}\times 
            \big(F_{k}\left(\mathbf{w}_{\mathcal{G}}^{t-1}\right)	
			-F_{k}^{*}\big)\Big)+\underset{k\in\mathcal{K}}{\sum}e_{k}\Phi_{k}^t\eta G^{2}.\label{ieq1}
		\end{align}}
		Subtracting $\mathbb{E}\big(\underset{k\in\mathcal{K}}{\sum}e_{k}\left(F_{k}\left(\mathbf{w}^{t-1}\right)-F_{k}^{*}\right)\big)$ from both sides of (\ref{ieq1}), we have
        {  \small
		\begin{subequations}
			\begin{align}
				\notag \mathbb{E}\Big(F&\left(\mathbf{w}_{\mathcal{G}}^{t}\right)-F\left(\mathbf{w}_{\mathcal{G}}^{t-1}\right)\Big)	=\mathbb{E}\Big(\underset{k\in\mathcal{K}}{\sum}e_{k}\left(F_{k}\left(\mathbf{w}_{\mathcal{G}}^{t}\right)-F_{k}^{*}\right)\Big)-\mathbb{E}\Big(\underset{k\in\mathcal{K}}{\sum}e_{k}\Big(F_{k}\left(\mathbf{w}_{\mathcal{G}}^{t-1}\right)-F_{k}^{*}\Big)\Big)\\
                \leq&\mathbb{E}\Big(\underset{k\in\mathcal{K}}{\sum}e_{k}\left(\left(1\!\!-\!\!c_k\right)^{\Phi_{k}^t}\!\!-1\right)\left(F_{k}\left(\mathbf{w}_{\mathcal{G}}^{t-1}\right)-F_{k}^{*}\right)\Big)+\underset{k\in\mathcal{K}}{\sum}e_{k}\Phi_{k}^t\eta G^{2}\label{47a}\\
				=&\mathbb{E}\big(\underset{k\in\mathcal{K}}{\sum}e_{k}\left(-\Phi_{k}^tc_k+\mathcal{O}\left(({c_k})^2\right)\right)\left(F_{k}\left(\mathbf{w}_{\mathcal{G}}^{t-1}\right)-F_{k}^{*}\right)\big)+\underset{k\in\mathcal{K}}{\sum}e_{k}\Phi_{k}^t\eta G^{2}\label{47b}	\\
				\leq&-\big(\underset{k\in{\cal K}^t_{n}}{\min}\{c_k\mathbb{E}\left(F_{k}\left(\mathbf{w}_{\mathcal{G}}^{t-1}\right)-F_{k}^{*}\right)\}-\eta G^{2}\big)\underset{k\in\mathcal{K}}{\sum}e_{k}\Phi_{k}^t	+\underset{k\in\mathcal{K}}{\sum}e_{k}\mathcal{O}\left((c_k)^{2}\right)\mathbb{E}\left(F_{k}\left(\mathbf{w}_{\mathcal{G}}^{t-1}\right)-F_{k}^{*}\right),
				\label{47c}	
			\end{align}		
		\end{subequations}}where 
		\eqref{47b} is obtained by expanding $\left(1-c_k\right)^{\Phi_{k}^t}-1$, and \eqref{47c} is due to deformation of \eqref{47b}.

	\subsection{Proof of Lemma 2}\label{proof-lemma2}
	The determinant of the Hessian matrix  of $w_1^n$, i.e., $\det\left(\nabla^{2}w_{1}^{n}\right)$, is given by
    {\small
		\begin{align}
			\det\left(\nabla^{2}w_{1}^{n}\right)	=&\frac{3\left|h_{n,|\mathcal{K}_{n}|}\right|^{2}p_{n,|\mathcal{K}_{n}|}^{*}-2^{A_{|\mathcal{K}_{n}|-1}(n)+1}\ln\big(\frac{\left|h_{n,|\mathcal{K}_{n}|}\right|^{2}p_{n,|\mathcal{K}_{n}|}^{*}}{2^{A_{|\mathcal{K}_{n}|-1}(n)}}+1\big)}{\big(\left|h_{n,|\mathcal{K}_{n}|}\right|^{2}p_{n,|\mathcal{K}_{n}|}^{*}+2^{A_{|\mathcal{K}_{n}|-1}(n)}\big)^{2}}			\times\frac{\left(S\beta_{\mathcal{K}_{n}(|\mathcal{K}_{n}|)}\right)^{2}\left|h_{n,|\mathcal{K}_{n}|}\right|^{2}p_{n,|\mathcal{K}_{n}|}^{*}}{\big(B_{n}\log_{2}\big(\frac{\left|h_{n,|\mathcal{K}_{n}|}\right|^{2}p_{n,|\mathcal{K}_{n}|}^{*}}{2^{A_{|\mathcal{K}_{n}|-1}(n)}}+1\big)\big)^{4}}.
			\label{det-hessian}
		\end{align}	}
According to the inequality $x\ln (1+\frac{A}{x})\leq A,\forall x>0$, we obtain
	\begin{align}	
3\left|h_{n,|\mathcal{K}_{n}|}\right|^{2}p_{n,|\mathcal{K}_{n}|}^{*}-2^{A_{|\mathcal{K}_{n}|-1}(n)+1}\ln\big(\frac{\left|h_{n,|\mathcal{K}_{n}|}\right|^{2}p_{n,|\mathcal{K}_{n}|}^{*}}{2^{A_{|\mathcal{K}_{n}|-1}(n)}}+1\big)  \geq
\left|h_{n,|\mathcal{K}_{n}|}\right|^{2}p_{n,|\mathcal{K}_{n}|}^{*}>0.\label{he1}
		\end{align}
The other parts of \eqref{det-hessian} are all non-negative, i.e.,	
$\big(\left|h_{n,|\mathcal{K}_{n}|}\right|^{2}p_{n,|\mathcal{K}_{n}|}^{*}+2^{A_{|\mathcal{K}_{n}|-1}(n)}\big)^{2}>0$, \\
$\left(S\beta_{\mathcal{K}_{n}(|\mathcal{K}_{n}|)}\right)^{2}\left|h_{n,|\mathcal{K}_{n}|}\right|^{2}p_{n,|\mathcal{K}_{n}|}^{*}>0$, 
and 
$\big(B_{n}\log_{2}\big(\frac{\left|h_{n,|\mathcal{K}_{n}|}\right|^{2}p_{n,|\mathcal{K}_{n}|}^{*}}{2^{A_{|\mathcal{K}_{n}|-1}(n)}}+1\big)\big)^{4}>0$.
	As a result, $\nabla^{2}w_{1}^{n}\succeq0$, is a positive semi-definite matrix and $w_{1}^{n}$ is convex in $A_{|\mathcal{K}_{n}|-1}(n)$ and~$B_{n}$.
	 Likewise, the determinant of the Hessian matrix of $z_2^n$, $\det\left(\nabla^{2}z_{2}^{n}\right)$, is
	\begin{align}		\det(\nabla^{2}z_{2}^{n})=\big(S\big(\sum\limits _{i=1}^{|\mathcal{K}_{n}|-1}\sqrt{\beta_{\mathcal{K}_{n}(i)}}\big)^{2}\big)^{2}\times&\frac{\left(3\left(A_{|\mathcal{K}_{n}|-1}(n)-A_{0}(n)\right)-\frac{2}{\ln2}\right)}{B_{n}^{4}\left(A_{|\mathcal{K}_{n}|-1}(n)-A_{0}(n)\right)^{5}}\geq0,
	\end{align}
since $3\left(A_{|\mathcal{K}_{n}|-1}(n)-A_{0}(n)\right)\geq\frac{2}{\ln2}$ following the inherent condition $|h_{n,1}|^2p_{n,1}>N_0B_n$.
Therefore, the Hessian matrix, $\nabla^{2}z_{2}^{n}\succeq0$, is positive semi-definite and $z_{2}^{n}$ is convex in  $A_{|\mathcal{K}_{n}|-1}(n)$ and $B_{n}$.
	
	\subsection{Optimal Joint Power and Bandwidth Allocation under MC-NOMA Sync-FL}
In the Sync-FL where all users train the same number of iterations per WFL round, we maximize the consistent LPTM of all users by solving a max-min problem:
		\begin{align}
			\text{\bf{P6}}:  &\mathop {\max }\limits_{\mathbf{p}_{n},B_{n},\forall n} \ \min_{k\in \mathcal{K}}\ \Phi_k 
			\quad\mathrm{s.t. } \;
			\text{\eqref{constraints a} --
   \textbf{\eqref{constraints c}}},  \notag
		\end{align}
where $\Phi_{k}=\frac{\phi_{k}}{|{\cal M}_{k}|}=\frac{T_{{\rm{c}},k}{ \beta_k }}{\alpha|{\cal M}_{k}|}=\frac{(T_k-T_{{\rm{d}},k}-T_{{\rm{u}},k}){ \beta_k }}{\alpha|{\cal M}_{k}|}$ based on \eqref{eq:fl delay}--\eqref{iteration}. 
Problem \textbf{P6} can be rewritten as 
	\begin{align}
    \mathop {\min }\limits_{\mathbf{p}_{n},B_{n},\forall n} \max_{k\in \mathcal{K}}   
   \frac{T_{{\rm{u}},k}{ \beta_k }}{\alpha|{\cal M}_{k}|} 
	\quad	\mathrm{s.t.} \;
		\text{ \eqref{constraints a} --
  \textbf{\eqref{constraints b}}},  \notag 
	\end{align}
{\color{blue}which can be further decoupled between the $N$ subchannels. 
   The $n$-th subproblem corresponding to the $n$-th subchannel, $\forall n$, is  $\min\limits_{\mathbf{p}_{n}} \max\limits_{ k\in \mathcal{K}_n} \frac{T_{{\rm{u}},k}{ \beta_k }}{\alpha|{\cal M}_{k}|},\, \text{s.t \eqref{constraints b}}$, since \eqref{constraints a} and \eqref{constraints e} only depend on $B_n$.}
The solution to the subproblem is only taken when  $\frac{T_{{\rm{u}},k}{ \beta_k }}{\alpha|{\cal M}_{k}|}$ is equal $ \forall k \in \mathcal{K}_n$ and denoted by
\begin{align}
     Q'_n= Q'_n(B_n)\triangleq\frac{T_{{\rm{u}},k}{ \beta_k }}{\alpha|{\cal M}_{k}|},\ \forall k \in \mathcal{K}_n.\label{p7:1}
\end{align}
Then, $ \max\limits_{k\in \mathcal{K}}\ \frac{T_{{\rm{u}},k}{ \beta_k }}{\alpha|{\cal M}_{k}|} = \max\limits_n\ Q'_n(B_n)$. 
Problem \textbf{P7} can be reformulated as
	\begin{align}
		\text{\bf{P8}}:  &\min\limits_{B_n,\forall n} \max\limits_{n}\;  Q'_n(B_n) \;\;\text{ s.t. \eqref{constraints a}, \eqref{constraints e}}.  \notag 
	\end{align}
Since $\frac{T_{{\rm{u}},k}{ \beta_k }}{\alpha|{\cal M}_{k}|}$ is a convex and monotonically decreasing function of $B_n$, $Q'_n(B_n)=\min\limits_{\mathbf{p}_{n}} \max\limits_{k\in \mathcal{K}_n} \frac{T_{{\rm{u}},k}{ \beta_k }}{\alpha|{\cal M}_{k}|}  $ is a convex and decreasing function of $B_n$;
see \cite[Sec. 3.2.3]{boyd2004convex}.
As a result, $\max\limits_{{n}}\;  Q'_n(B_n)$ is a convex function of $B_n,\forall n$. Problem \textbf{P8} has a convex objective and liner constraints. 

By variable substitution using \eqref{eq:APi}, we rewrite \eqref{p7:1} as
\begin{align}
Q'_n &=\frac{S\beta_{\mathcal{K}_{n}(|\mathcal{K}_n|)}/(B_n\alpha|{\cal M}_{\mathcal{K}_{n}(|\mathcal{K}_n|}|)}{A_{|\mathcal{K}_{n}|}({n})-A_{|\mathcal{K}_{n}|-1}({n})}
=\frac{\sum_{i=1}^{|\mathcal{K}_{n}|-1}\left(S\beta_{\mathcal{K}_{n}(i)}/(B_{n}\alpha|{\cal M}_{\mathcal{K}_{n}(i)}|)\right)}{A_{|\mathcal{K}_{n}|-1}({n})-A_{0}({n})}.\label{P7:2}
\end{align}

Since $\max \limits_{k\in \mathcal{K}_n}{\frac{T_{{\rm{u}},k}{ \beta_k }}{\alpha|{\cal M}_{k}|}}$ decreases with the increase of $p_{n,|\mathcal{K}_n|}$, $p_{n,|\mathcal{K}_n|}^*=P_{\max}$ when the minimum of $\max \limits_{k\in \mathcal{K}_n}{\frac{T_{{\rm{u}},k}{ \beta_k }}{\alpha|{\cal M}_{k}|}}$ is taken.
From \eqref{A_K}, $A_{|\mathcal{K}_{n}|}({n})$ can be expressed with to $A_{|\mathcal{K}_{n}|-1}({n})$ and $B_n$. Substituting \eqref{A_K} into \eqref{P7:2} and solving \eqref{P7:2}, we can obtain the optimal $A^*_{|\mathcal{K}_{n}|-1}({n})$ as a function of $B_n$. By substituting $A^*_{|\mathcal{K}_{n}|-1}({n})$ into \eqref{P7:2}, we can obtain the expression for $Q'_n(B_n)$, with which
the optimal $B^*_n$ can be solved using off-the-peg CVX tools. 
Then, the optimal power allocation $\mathbf{p}_n^*$ is obtained, as described in \textbf{Algorithm~\ref{Algorithm 2}}.

	\bibliographystyle{IEEEtran}
	\bibliography{ciations}

\end{document}